\documentclass[11pt,reqno]{amsart}
\usepackage[top=2.54cm, bottom=2.54cm, left=2.8cm, right=2.8cm]{
geometry}
\usepackage{dsfont, amssymb,amsmath,amscd,latexsym, amsthm, amsxtra,amsfonts, extarrows}
\usepackage[all]{xy}
\usepackage[active]{srcltx}
\usepackage[round]{natbib}
\usepackage{bbm}
\usepackage{enumerate}
\usepackage{mathrsfs}
\bibliographystyle{plainnat}
\usepackage{graphicx}
\usepackage{comment}
\usepackage{mathtools}

\usepackage{tikz}
\usetikzlibrary{calc,arrows}
\usepackage{verbatim}
\usepackage{graphicx}
\usepackage{subfigure}
\usepackage{units}
\usepackage{float}

\newtheorem{theorem}{Theorem}[section]

\theoremstyle{definition}
\newtheorem{definition}[theorem]{Definition}

\renewcommand{\theequation}{\arabic{section}.\arabic{equation}}

\theoremstyle{definition}

\theoremstyle{definition}
\newtheorem{remark}{Remark}
\theoremstyle{definition}

\newcommand{\rd}{\mathrm{d}}

\renewcommand{\epsilon}{\varepsilon}

\usepackage[pdfstartview=FitH, bookmarksnumbered=true,bookmarksopen=true, colorlinks=true, pdfborder=001, citecolor=blue, linkcolor=blue,urlcolor=blue]{hyperref}
\usepackage{graphics}
\graphicspath{{figures/}}
\usepackage{xpatch}
\usepackage{xcolor}

\makeatletter
\ExplSyntaxOn
\cs_new:Npn \bibColoredItems #1#2
{
	\clist_map_inline:nn {#2} { \cs_new:cpn {bib@colored@##1} {#1} }
}
\ExplSyntaxOff

\newcommand\bib@setcolor[1]{%
	\ifcsname bib@colored@#1\endcsname
	\expandafter\color\expandafter{\csname bib@colored@#1\endcsname}
	\else
	\normalcolor
	\fi
}

\xpatchcmd\@bibitem
{\item}
{\bib@setcolor{#1}\item}
{}{\fail}

\xpatchcmd\@lbibitem
{\item}
{\bib@setcolor{#2}\item}
{}{\fail}
\makeatother

\begin{document}
	\makeatletter
	\def\@setauthors{%
		\begingroup
		\def\thanks{\protect\thanks@warning}%
		\trivlist \centering\footnotesize \@topsep30\p@\relax
		\advance\@topsep by -\baselineskip
		\item\relax
		\author@andify\authors
		\def\\{\protect\linebreak}%
		{\authors}%
		\ifx\@empty\contribs \else ,\penalty-3 \space \@setcontribs
		\@closetoccontribs \fi
		\endtrivlist
		\endgroup} \makeatother
	\baselineskip 18pt
	\title[{{\tiny $N$-player and mean field games}}]
	{{\tiny
			$N$-player and mean field games among fund managers considering excess logarithmic returns}} \vskip 10pt\noindent
	\author[{\tiny  Guohui Guan, Jiaqi Hu, Zongxia Liang}]
	{\tiny {\tiny  Guohui Guan$^{a,b,*}$, Jiaqi Hu$^{c,\dag}$, Zongxia Liang$^{c,\ddag}$}
		\vskip 10pt\noindent
		{\tiny ${}^a$Center for Applied Statistics, Renmin University of China, Beijing 100872, China
			\vskip 10pt\noindent\tiny
			${}^b$School of Statistics, Renmin University of China, Beijing 100872, China
			\vskip 10pt\noindent
			${}^c$Department of Mathematical Sciences, Tsinghua
			University, Beijing 100084, China
		}\noindent
		\footnote{{$^*$ {\bf e-mail}: guangh@ruc.edu.cn}}\noindent
		\footnote{{$^\dag $ Corresponding author, \  {\bf e-mail}: hujq20@mails.tsinghua.edu.cn }}  \noindent
		\footnote{{$\ddag $ {\bf e-mail}: liangzongxia@tsinghua.edu.cn}}}
	\numberwithin{equation}{section}
	\maketitle
	\noindent
	\begin{abstract}
		This paper studies the competition among multiple fund managers with relative performance over the excess logarithmic return. Fund managers compete with each other and have expected utility or mean-variance criteria for excess logarithmic return. 
  Each fund manager possesses a unique risky asset, and all fund managers can also invest in a public risk-free asset and a public risk asset. We construct both an $n$-player game and a mean field game (MFG) to address the competition problem under these two criteria. We explicitly define and rigorously solve the equilibrium and mean field equilibrium (MFE) for each criteria. In the four models, the excess logarithmic return as the evaluation criterion of the fund leads to the { allocation fractions} being constant. The introduction of the public risky asset yields different outcomes, with competition primarily affecting the investment in public assets, particularly evident in the MFG.  We demonstrate that the MFE of the MFG represents the limit of the $n$-player game's equilibrium as the competitive scale $n$ approaches infinity. Finally, the sensitivity analyses of the equilibrium are given.
		\vskip 10 pt \noindent
		JEL classification:  C61, G11, C72.	 
			\vskip 10 pt \noindent
			2020 Mathematics Subject Classification: 91A06, 91A16 , 49L20, 91B16, 91B05.
			\vskip 10pt  \noindent
		Keywords: portfolio management; multi-fund manager competition; excess logarithmic return; $n$-player game; mean field game.
	\end{abstract}
	\vskip10pt
	\setcounter{equation}{0}

\section{{{\bf Introduction}}}

Portfolio selection has a long history in finance, leading to two significant research directions based on different criteria. One direction is derived from the mean-variance criterion, introduced by \cite{markowits1952portfolio} in a single-period setting and extended to a continuous-time setting by \cite{zhou2000continuous}. The other direction is based on expected utility maximization, as developed by \cite{Merton1969} and \cite{merton1975optimum}. These two different criteria have been widely applied in financial mathematics, see \cite{goldfarb2003robust}, \cite{liu2007portfolio}, \cite{yu2011investor}, \cite{chiu2016supply}, \cite{georgantas2024robust}, etc.

Most of the literature on portfolio selection focuses on absolute capital value, although some portfolio optimization studies consider return. In fact, it is more common in the broader financial literature to emphasize return rather than absolute capital value. \cite{markowits1952portfolio} conducts a mean-variance analysis of returns. However, in the one-period model, returns and capital values are equivalent, so subsequent research on the mean-variance criterion has focused on maximizing absolute capital values at the terminal time. \cite{kelly1956new} considers the maximization of the expected growth rate (logarithmic return) and introduced the concept of the growth optimal portfolio (GOP). There is also some follow-up literature, such as \cite{aurell2000general} and \cite{thorp2008kelly}. Many studies in finance that focus on returns use logarithmic returns, as does \cite{kelly1956new}, due to their additivity and the fact that they tend to follow a normal distribution.

In practice, institutions with long-term investment goals, particularly asset management funds, do not focus solely on the value of their portfolio capital. Instead, they prioritize portfolio returns and often assess performance using metrics like excess returns over stock indices or interest rates. For example, many of China's quant funds consider excess returns of the CSI 300, CSI 500, or CSI 1000, and China's industry-specific funds often compare their absolute return curve with the average return curve of the same industry. \cite{fama1993common} emphasize the importance of excess returns, providing an income-based perspective to understand portfolio performance.  \cite{malkiel1995returns} suggests that revenue-based assessments are more effective in the long run than capital growth alone. Using returns as a performance evaluation metric offers several practical advantages over capital value, as it can mitigate the impact of initial funding and investment duration, facilitating comparisons between different portfolio types. Investors' subscription and redemption behavior during the life of a fund can affect its size but not its return rate, allowing returns to eliminate this impact on fund evaluation. Moreover, funds are predominantly ranked based on their returns over a fixed period, leading to an influx of investments into funds that outperform benchmark indexes or rank at the top. For fund managers, achieving higher returns not only garners recognition and prestige but also attracts potential investors, contributing to fund expansion and increased management fees. In the fiercely competitive fund management landscape, evaluating portfolio performance based on returns rather than capital value appears more appropriate. Much of the literature on evaluating fund performance by  returns, such as \cite{carhart1997persistence}, \cite{cremers2009active} and \cite{berk2015measuring}, etc.

Given the importance of returns in the financial industry, recent literature on portfolio optimization has increasingly considered logarithmic returns. \cite{dai2021dynamic} proposes a dynamic portfolio choice model with the mean-variance criterion for logarithmic returns, whose portfolio policies conform with conventional investment wisdom. \cite{peng2023relative} considers the cumulative prospect theory and extends the classical growth optimal problem to the behavioral framework. Our research focuses on utilizing logarithmic returns to analyze the competitive dynamics among fund managers. In a competitive environment with multiple fund managers, excess returns are more significant than simple returns. Fund managers strive to outperform their peers or a benchmark that represents peer performance, aiming for a larger share of excess returns while being highly averse to underperforming the benchmark due to concerns about fund survival. Therefore, it is advisable to initially consider logarithmic returns and then assess the excess logarithmic return. \cite{peng2023relative} employs excess logarithmic return, where the benchmark can be chosen as the return of a stock index or structured product, the deposit or loan interest rate, a constant target rate, etc.

Competition among fund managers is a well-documented aspect of investment practice for both mutual and hedge funds; see, for example, \cite{chevalier1997risk}, \cite{sirri1998costly}, \cite{brown2001careers}, and \cite{kempf2008tournaments}. These studies typically consider discrete cases involving two fund managers. In contrast, \cite{basak2015competition} proposes a continuous-time log-normal model for two fund managers with power utilities, and \cite{lacker2019mean} examines competition among multiple fund managers with CARA or CRRA utilities in continuous time. Beyond fund managers, the literature also explores competition among insurance companies. For example, \cite{guan2022time} delves into competition among multiple insurers under the mean-variance criterion. Notably, both \cite{lacker2019mean} and \cite{guan2022time} use differences in capital values to directly depict competition, avoiding the use of excess logarithmic returns. Although much of the literature focuses on fund managers’ concern with capital values, studies specifically addressing competition over returns are relatively scarce.

This paper studies the competition among multiple fund managers using logarithmic returns for evaluation and excess logarithmic returns to represent competition. We model two types of fund managers: one optimizing exponential utility for excess logarithmic returns and another using a mean-variance criterion for excess returns. The latter introduces time inconsistency, which we address using the time-consistent Nash equilibrium strategy from \cite{bjork2017time}. Similar to \cite{lacker2019mean} and \cite{guan2022time}, we extend the analysis of multi-player competition from $n$-player games to mean field games (MFGs), as introduced by \cite{lasry2007mean} and \cite{caines2006large}. This paper uses excess logarithmic returns to evaluate funds and examines four models: two $n$-player games and two MFGs, each involving two types of fund managers.

{
Unlike \cite{lacker2019mean} and \cite{guan2022time}, our framework introduces three distinct asset classes available to fund managers: (1) a public risk-free asset, serving as a safe investment with guaranteed returns; (2) a private risk asset, unique to each fund and reflecting idiosyncratic opportunities tied to a manager’s specialization, industry focus, or skill level; and (3) a public risky asset, accessible to all managers and modeled as an index or risk bond. This public risky asset may exhibit positive or negative correlations with private assets, enabling managers to hedge risks or enhance portfolio stability. In practice, managers often allocate to public risky assets for strategic purposes. For example, some fund managers hold industry indices that are negatively correlated with their own performance to hedge risks; others short indices that are positively correlated with their own products to enhance fund stability; and some managers may allocate a portion of their funds to public assets such as bonds due to investment scope or industry restrictions. For simplicity, we assume a single public risky asset, which may correlate variably (positively or negatively) with each manager’s private risky asset. The public risky asset is identical across all managers, serving as a shared tool to navigate systemic risks.  The public risky asset  (e.g., an S\&P 500 index fund or Treasury bond) embodies economy-wide risks (e.g., inflation, rate hikes) that affect all managers simultaneously. For instance, during a recession, multiple funds might short the same equity index to hedge their portfolios, creating collective exposure to its price movements.
}

Our models show that the excess logarithmic return evaluation criterion and the inclusion of public risk assets significantly influence the results. Specifically, the excess logarithmic return evaluation results in a constant equilibrium { allocation fraction}, which differs notably from the findings of \cite{lacker2019mean} and \cite{guan2022time}. Additionally, the presence of public risk assets means that competition among fund managers is primarily reflected in their investments in these assets. In the MFG, competition does not affect investments in private risk assets, whereas in the $n$-player game, such investments are influenced by the number of players 
$n$ and the competition weight parameter for each manager. Fund managers’ decisions are influenced by their own risk aversion and competition weight, but not by the asset parameters of other managers. In the MFG framework, the investment in public risk assets is divided into two components: one that disregards competition and another that addresses competitive risk. The MFE of the MFG serves as the limit of the $n$-player game equilibrium as the number of players approaches infinity. The constant equilibrium investment ratio is practical and is supported by sensitivity analyses performed through derivative calculations and numerical simulations.

The remainder of this paper is organized as follows: Section 2 presents the market model. Section 3 formulates an $n$-player game and an MFG for fund managers with exponential utility for excess logarithmic returns, analyzing and discussing the constant equilibrium and constant MFE for these models. Section 4 extends this to an $n$-player game and an MFG for fund managers using a mean-variance criterion for excess returns, again deriving and analyzing the constant equilibrium and constant MFE. Section 5 concludes the paper, with most proofs provided in the Appendices.

\section{\bf Market model}
Let $\left(\Omega,\mathcal{F},\mathbb{P}\right)$ be a complete probability space with an augmented natural filtration $\left\{\mathcal{F}_t\right\}_{t\geq0}$ generated by $n+1$ independent standard Brownian motions $B$ and $W^k,k=1,2,\cdots,n$. $[0,\mathcal{T}]$ is a fixed investment time horizon.

In the financial market, there are $n$ fund managers and $n$ risky assets $S^k$ for $k=1, 2, \ldots, n$. Each risky asset $S^k$ is exclusively associated with the $k$-th fund manager, representing differences in investment scope, such as stocks, options and futures, virtual currencies, and bonds, among different fund managers. These assets also reflect variations in investment industries and sectors, as well as disparities in the level of fund products due to the diverse styles and strengths of individual fund managers.

Additionally, there is a risk-free bond $S^0$ and a public risk asset $S^*$, which enhance the investment choices available to fund managers and offer opportunities for risk hedging. The public risk asset $S^*$ is not a single stock but is considered an index or a risky bond. In practice, some fund managers hold industry indices negatively correlated with their own track to hedge risks, while others short indices positively correlated with their own products to improve fund stability. Some fund managers may also need to invest a certain { allocation fraction} of their funds in bonds due to restrictions on investment scope and industry, which can lead to weak income potential or large fluctuations over a fixed period. For simplicity, we assume there is only one public risk asset, $S^*$. Thus, the $k$-th fund manager can invest in $S^0$, $S^*$, and $S^k$ satisfying

\begin{eqnarray}
	\frac{\rd S^0_t}{S^0_t} &=& \kappa \rd t,\nonumber\\
	\frac{\rd S^*_t}{S^*_t} &=& \left(\kappa+\mu\right) \rd t + \sigma \rd B_t , \nonumber\\
	\frac{\rd S^k_t}{S^k_t} &=& \left(\kappa+\mu_k\right) \rd t + \sigma_k \rd B_t + \nu_k dW^k_t,\quad k=1,2,\cdots,n \nonumber
\end{eqnarray}
with constant market parameters $\kappa\in\mathbb{R}^{+},\mu\in\mathbb{R}^{+},\mu_k\in\mathbb{R}^{+},\sigma\in\mathbb{R}^{+},\nu_k\in\mathbb{R}^{+}$ and $\sigma_k\in \mathbb{R},$ $ k=1,2,\cdots,n$. The correlation between $S^*$ and $S^k$ is entirely determined by the standard Brownian motion $B$.

{
\begin{remark}
  The risk-free asset $S^0$ has a return rate $\kappa > 0$. Risky assets $S^*$ and $S^k$ are assumed to have higher expected returns than $S^0$ (i.e., $\mu > 0$ and $\mu_k > 0$), making them attractive to risk-averse investors despite their volatility. For simplicity, we take $\sigma > 0$ and $\nu_k > 0$. If $\sigma$ were negative, we could redefine the Brownian motion $B$ as $-B$ to make $\sigma$ positive. Similarly, $\nu_k$ represents idiosyncratic volatility and is inherently non-negative. The parameter $\sigma_k$, however, can take negative values. This is because the correlation between $S^*$ and $S^k$ is given by
  \[
    \rho_k = \frac{\sigma_k}{\sqrt{\sigma_k^2 + \nu_k^2}},
  \]
  where the sign of $\sigma_k$ determines whether $S^*$ and $S^k$ are positively or negatively correlated. Allowing $\sigma_k \in \mathbb{R}$ enables us to model both positive and negative correlations.
\end{remark}
}

\vspace{0.5cm}

{ Let $X^k_t$ denote the total wealth of the $k$-th fund manager at time $t$, and denote $\alpha^k_t$ and $\beta^k_t$ as the allocation fractions of the $k$-th fund manager's investment in $S^*$ and $S^k$ at time $t$, respectively. Since short-selling is permitted in the financial market, the values of $\alpha^k_t$ and $\beta^k_t$ belong to the set of real numbers $\mathbb{R}$. Define $\pi^k_t := \begin{pmatrix}
	\alpha^k_t \\ \beta^k_t
\end{pmatrix}$ , $t\leq \mathcal{T}$,
and $\pi^k := \{\pi^k_t\}_{t\in\left[0,\mathcal{T}\right]}$ is the strategy of the $k$-th fund manager.  The wealth process of the $k$-th fund manager  $X^{k,\pi^k}=\left\{X^{k,\pi^k}_t\right\}_{t\in\left[0,\mathcal{T}\right]}$ with strategy $\pi^k$ evolves as follows: }
\begin{eqnarray}
	\rd X^{k,\pi^k}_t &=& X^{k,\pi^k}_t\left[\left(\kappa + \mu \alpha^k_t + \mu_k\beta^k_t \right) \rd t + \left(\sigma \alpha^k_t + \sigma_k\beta^k_t \right) \rd B_t + \nu_k\beta^k_t \rd W^k_t\right],\nonumber\\
	X^{k,\pi^k}_0 &=& x^{k,\pi^k}_0, \quad k=1,2,\cdots,n. \nonumber
\end{eqnarray}
We consider the logarithmic return of the $k$-th fund manager $R^{k,\pi^k}_t=\log\left(\frac{X^{k,\pi^k}_t}{x^{k,\pi^k}_0}\right)$ evolving as
\begin{eqnarray}
	\rd R^{k,\pi^k}_t &=& \left[\left(\kappa + \mu \alpha^k_t + \mu_k\beta^k_t \right) - \frac{1}{2}\left(\sigma \alpha^k_t + \sigma_k\beta^k_t\right)^2 - \frac{1}{2}\left(\nu_k\beta^k_t\right)^2\right] \rd t \nonumber \\&&+ \left(\sigma \alpha^k_t + \sigma_k\beta^k_t \right) \rd B_t + \nu_k\beta^k_t \rd W^k_t,\nonumber\\
	R^{k,\pi^k}_0 &=& 0, \quad k=1,2,\cdots,n. \nonumber
\end{eqnarray}

{
For the \( k \)-th fund manager, the utility function \( U^k(\cdot) \) is based on the excess logarithmic return at the terminal time \( \mathcal{T} \), given by \( R^{k, \pi^k}_\mathcal{T} - \vartheta \), where \( \vartheta \) represents a benchmark logarithmic return. The benchmark \( \vartheta \) is defined as the market average return, which is a proportion \( \theta_k \) of the average log-returns of the \( n \) fund managers at terminal time \( \mathcal{T} \). Specifically, 
\[
\vartheta = \theta_k \overline{R_\mathcal{T}}, \quad \overline{R_\mathcal{T}} := \frac{1}{n} \sum_{i=1}^{n} R^{i, \pi^i}_\mathcal{T}, \quad 0 \le \theta_k \le 1.
\]
}

Additionally, we assert that different fund managers have varying criteria and attitudes toward the benchmark. Specifically, $\theta_k$ is referred to as the competition weight parameter for the $k$-th fund manager. When $\theta_k=0$,  the manager is not concerned with the performance of other managers. The managers consider utility functions that fall into two main categories: expected utility criteria and mean-variance criteria. The framework for fund manager competition is built around these criteria, as studied in Section \ref{Exponential utility} and Section \ref{Mean-Variance criterion}.

\newpage
\section{{{\bf Exponential utility criterion}}}\label{Exponential utility}

{
We assume that the fund managers' risk aversion is independent of the excess logarithmic return \( R^{k, \pi^k}_\mathcal{T} - \theta_k \frac{1}{n} \sum_{i=1}^{n} R^{i, \pi^i}_\mathcal{T} \) and remains constant. Specifically, the risk aversion is given by \( -\frac{U^{k}_{zz}}{U^{k}_{z}} \equiv \frac{1}{\delta_k} \) for \( k = 1, 2, \cdots, n \),} where $\delta_k>0$ is the absolute risk aversion coefficient. Then the utility function of the $k$-th fund manager $U^k$ is an exponential utility function given by $$U^k(z)=-\exp\left(-\frac{1}{\delta_k}z\right)$$ and the payoff of the $k$-th fund manager is given by { 
\begin{eqnarray}
	J_k\left(\pi^k;\pi^{-k}\right):=\mathbb{E}\left\{-\exp\left[-\frac{1}{\delta_k}\left(R^{k,\pi^k}_\mathcal{T}-\frac{\theta_k}{n}\sum_{i=1}^{n}R^{i,\pi^i}_\mathcal{T}\right)\right]\right\},\nonumber
\end{eqnarray}}
where $\pi^{-k} = \left(\pi^1,\cdots,\pi^{k-1},\pi^{k+1},\cdots,\pi^n\right)$ represents the strategies of the other $n-1$ fund managers and $\mathbb{E}$ represents the expectation. $\delta_k>0$ represents the $k$-th fund manager's risk aversion coefficient. A higher value of $\delta_k$ indicates that the manager is more risk-averse.

\begin{remark}
	While both the aforementioned utility function and the CARA utility are exponential in nature and assume constant absolute risk aversion, they differ fundamentally. The CARA utility is typically employed with absolute capital or consumption, whereas in this paper, it is applied to excess logarithmic returns.
\end{remark}
\begin{definition}\label{def1}
	A strategy $\left(\pi^{1},\cdots,\pi^{n}\right)$ is admissible if, for any $k=1,\cdots,n$, $\pi^k$ is  $\left\{\mathcal{F}_t\right\}_{t\geq0}$- progressively measurable and satisfies { 
	\begin{eqnarray}
		\mathbb{E}\int_{0}^{\mathcal{T}}\left[(\alpha^{k}_t)^2+(\beta^{k}_t)^2\right] \rd t < +\infty.\nonumber
	\end{eqnarray}}
An admissible strategy $\left(\pi^{1,*},\cdots,\pi^{n,*}\right)$ constitutes an equilibrium if, for any $\pi^k$ and $k=1,\cdots,n$,
	\begin{eqnarray}
		J_k\left(\pi^{k,*};\pi^{-k,*}\right)\ge J_k\left(\pi^{k};\pi^{-k,*}\right).\nonumber
	\end{eqnarray}
 A constant equilibrium is an equilibrium in which, for each $k$, both $\alpha^{k,*}$ and $\beta^{k,*}$ are constant over time.
\end{definition}

If the public risky asset $S^*$ is eliminated, our model aligns mathematically with the one outlined in Section 3 of \cite{lacker2019mean}. Nevertheless, the two models diverge in an economic context. Our approach initiates from the rate of return, assuming absolute risk aversion that is unrelated to returns. Conversely, \cite{lacker2019mean} assumes relative risk aversion tied to wealth. Additionally, the results differ significantly when public risk assets are considered.
\begin{theorem}\label{the1}
	There exists a unique constant equilibrium, given by
	\begin{eqnarray}
		\alpha^{k,*}&=&\frac{\delta_k\sigma_k\left(\mu\sigma_k-\mu_k\sigma\right)}{\left(1+\delta_k-\frac{\theta_k}{n}\right)\sigma^2\nu_k^2}+\frac{\mu\delta_k+\theta_k\sigma D}{\left(1+\delta_k\right)\sigma^2},\quad k=1,\cdots,n,\label{alp1}\\
		\beta^{k,*}&=&\frac{\delta_k\left(\mu_k\sigma-\mu\sigma_k\right)}{\left(1+\delta_k-\frac{\theta_k}{n}\right)\sigma\nu_k^2},\quad k=1,\cdots,n,\label{bet1}
	\end{eqnarray}
	where
	\begin{eqnarray}
		D:= \frac{\frac{1}{n}\sum_{i=1}^{n}\frac{\delta_k}{1+\delta_k}\frac{\mu}{\sigma}}{1-\frac{1}{n}\sum_{i=1}^{n}\frac{\theta_k}{1+\delta_k}}.\label{D1}
	\end{eqnarray}
\end{theorem}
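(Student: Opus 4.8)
The plan is to solve this $n$-player game by the standard martingale/HJB approach for exponential utility, exploiting the fact that the dynamics of each $R^{k,\pi^k}$ have constant coefficients once we restrict to constant strategies. First I would fix player $k$ and treat the other players' strategies $\pi^{-k,*}$ as given constants (to be determined self-consistently at the end). Since each opponent uses a constant strategy, $\overline{R_\mathcal{T}}$ is a sum of terminal values of processes with constant drift and diffusion coefficients, so the objective $J_k$ reduces to an expectation of an exponential of a Gaussian-type terminal quantity. The key observation is that $R^{k,\pi^k}_\mathcal{T} - \frac{\theta_k}{n}\sum_i R^{i,\pi^i}_\mathcal{T}$ decomposes into the part the $k$-th manager controls, namely $(1-\frac{\theta_k}{n})R^{k,\pi^k}_\mathcal{T}$, plus the uncontrolled contributions $-\frac{\theta_k}{n}\sum_{i\neq k}R^{i,\pi^i}_\mathcal{T}$ which depend on $\pi^k$ only through the shared Brownian motion $B$.

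Next I would set up the best-response problem. Writing the effective diffusion exposures $a^k_t := \sigma\alpha^k_t+\sigma_k\beta^k_t$ (loading on $B$) and $\nu_k\beta^k_t$ (loading on $W^k$), the manager's controlled return, net of the benchmark, has a $B$-exposure of $(1-\frac{\theta_k}{n})a^k_t - \frac{\theta_k}{n}\sum_{i\neq k}a^i$ and a $W^k$-exposure of $(1-\frac{\theta_k}{n})\nu_k\beta^k_t$ (the opponents' $W^i$ terms are independent and integrate out). I would pose the HJB equation for the value function and, motivated by the CARA structure, guess an ansatz of the form $V(t,r)=-\exp(-\frac{1}{\delta_k}(c_1 r + c_2(t)))$ or equivalently pass to the dual/exponential-moment formulation, reducing the pointwise optimization to a quadratic maximization in $(\alpha^k,\beta^k)$. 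The first-order conditions yield a linear system whose solution gives $\alpha^{k,*},\beta^{k,*}$ in terms of the aggregate quantity $D$ encoding $\frac{1}{n}\sum_i a^i$.

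The fixed-point step is where the self-consistency enters: $D$ in \eqref{D1} is defined through the average of the equilibrium $B$-exposures, so I would substitute the best-response formulas back into the definition of the aggregate, obtain a single scalar linear equation for $D$, and solve it explicitly; the geometric-series-like denominator $1-\frac{1}{n}\sum_i\frac{\theta_k}{1+\delta_k}$ in \eqref{D1} is exactly the signature of solving that linear fixed-point equation. I would then verify that the resulting constants satisfy the admissibility integrability condition (immediate, since constants trivially satisfy the $L^2$ bound on $[0,\mathcal{T}]$) and that the candidate is a genuine maximizer, for instance by a verification argument confirming the ansatz solves the HJB equation and that the associated exponential martingale is a true martingale under the constant controls.

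The main obstacle I anticipate is the fixed-point / consistency bookkeeping rather than the single-player optimization: because each manager's best response depends on the common exposure $a^k=\sigma\alpha^k+\sigma_k\beta^k$ of every other manager through $D$, one must carefully disentangle which parts of $\alpha^{k,*}$ are competition-driven (the $\theta_k\sigma D$ term) versus intrinsic (the Merton-type term), show the linear fixed-point equation for $D$ has a unique solution under the parameter assumptions so that the denominator does not vanish, and confirm that $\beta^{k,*}$ is in fact \emph{unaffected} by competition through $D$ (it depends only on $1-\frac{\theta_k}{n}$), which is the qualitatively surprising feature the theorem records. Establishing uniqueness of the constant equilibrium then amounts to showing the best-response map has a unique fixed point, which follows from the linearity of the aggregation step.
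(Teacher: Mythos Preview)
Your proposal is correct and follows essentially the same route as the paper: fix the other players at constants, decompose the excess return as $(1-\tfrac{\theta_k}{n})R^k_\mathcal{T}-\theta_k Y_\mathcal{T}$ with $Y_t=\tfrac1n\sum_{i\neq k}R^i_t$, solve the $k$-th player's HJB with an exponential ansatz, extract the best response as a linear function of the aggregate $B$-exposure $E=\tfrac1n\sum_i(\sigma\alpha^i+\sigma_i\beta^i)$, and close the loop by solving the scalar linear fixed-point equation for $E$ (yielding $E=D$). The only cosmetic difference is that the paper keeps $Y_t$ as a second state variable in the value function $v(t,r,y)$ rather than absorbing it into the time-dependent constant of your one-dimensional ansatz $V(t,r)$; since $Y$ is uncontrolled with constant coefficients this is immaterial.
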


\begin{proof}
	See Appendix \ref{proof1}.
\end{proof}

\subsection{{{\bf Analysis about the constant equilibrium $\left(\alpha^{k,*},\beta^{k,*}\right)$}}}
\label{analysis1}

For the $k$-th fund manager, the { allocation fractions} of investment in $S^*$ and $S^k$, denoted by $\alpha^{k,*}$ and $\beta^{k,*}$, are both constant. The { allocation fraction} $\alpha^{k,*}$ is influenced by the parameters of other investment managers, including their competition weight parameters $\theta_i$ (for $i \neq k$) and risk aversion coefficients $\delta_i$ (for $i \neq k$), but does not depend on the parameters related to the risky assets $S^i$ (for $i \neq k$). In contrast, $\beta^{k,*}$ is not affected by the parameters of other investment managers but is influenced by  the own competition weight parameter $\theta_k$.

{
The structure of this subsection is as follows. First, we analyze the investment ratio of the \( k \)-th fund manager in \( S^k \), denoted as \( \beta^{k,*} \). Next, we examine the investment ratio of the \( k \)-th fund manager in \( S^* \), which is denoted as \( \alpha^{k,*} \). A detailed analysis is provided in Remark \ref{remark_new}. We define the virtual Sharpe ratio for \( S^k \) as \( \frac{\mu_k}{\sigma_k} \) and consider three cases: (Case 1) when \( \sigma_k > 0 \) and the Sharpe ratio of the public asset \( S^* \) exceeds that of \( S^k \), i.e., \( \sigma_k > 0 \) and \( \frac{\mu}{\sigma} > \frac{\mu_k}{\sigma_k} \); (Case 2) when \( \sigma_k > 0 \) and the virtual Sharpe ratio of \( S^k \) exceeds that of \( S^* \), i.e., \( \sigma_k > 0 \) and \( \frac{\mu_k}{\sigma_k} > \frac{\mu}{\sigma} \); and (Case 3) when \( \sigma_k < 0 \), indicating that \( S^* \) and \( S^k \) exhibit opposite fluctuations.
}

{
\begin{remark}
The Sharpe ratio of asset \( S^* \) is given by \( \frac{\mu}{\sigma} \), and the Sharpe ratio of asset \( S^k \) is \( \frac{\mu_k}{\sqrt{\sigma_k^2 + \nu_k^2}} \). When the influence of the Brownian motion \( W^k \), which is independent of the Brownian motion \( B \), is ignored (i.e., when \( \nu_k = 0 \)), the Sharpe ratio of \( S^k \) simplifies to \( \frac{\mu_k}{|\sigma_k|} \). At this stage, the correlation between \( S^* \) and \( S^k \) (i.e., the sign of \( \sigma_k \)) has not yet been considered. In the subsequent analyses, it is often necessary to compare the Sharpe ratio of \( S^* \), \( \frac{\mu}{\sigma} \), with \( \frac{\mu_k}{\sigma_k} \). The latter can be seen as a Sharpe-like ratio that accounts for the positive and negative correlations when only \( B \) is considered and \( W^k \) is excluded. This ratio is referred to as the virtual Sharpe ratio.
	\end{remark}
}

{
\subsubsection{{\bf Analysis of \( \beta^{k,*} \)}}

Let us begin by analyzing \( \beta^{k,*} \), the { allocation fraction} of investment in \( S^k \). First, we examine the impact of parameters related to competition, including the parameters of other fund managers, the number of fund managers \( n \), and the individual competition weight parameter \( \theta_k \). Next, we consider the parameters of the risky assets (\( \mu, \sigma, \mu_k, \sigma_k, \nu_k \)) and the fund manager’s own risk aversion coefficient \( \delta_k \).

Although \( \beta^{k,*} \) is independent of the parameters of other fund managers, it does not represent the optimal solution to the classical individual optimal investment problem described by Eq.~(\ref{bet1}) with \( \theta_k = 0 \). Rather, it is a modification of this classical solution that incorporates competition effects.} One obtains
\begin{eqnarray}
	\frac{\partial\beta^{k,*}}{\partial\theta_k}=\frac{\delta_k\left(\mu_k\sigma-\mu\sigma_k\right)}{n\left(1+\delta_k-\frac{\theta_k}{n}\right)^2\sigma\nu_k^2}.\label{partial1}
\end{eqnarray}
For case 1, $\beta^{k,*} < 0$ and $\frac{\partial \beta^{k,*}}{\partial \theta_k} < 0$, indicating that if the Sharpe ratio of $S^*$ is higher than the virtual Sharpe ratio of $S^k$, the fund manager will short $S^k$, with the { allocation fraction} of short sales increasing as $\theta_k$ rises. For cases 2 and 3, $\beta^{k,*} > 0$ and $\frac{\partial \beta^{k,*}}{\partial \theta_k} > 0$, implying that if the virtual Sharpe ratio of $S^k$ exceeds the Sharpe ratio of $S^*$ or $\sigma_k < 0$, the fund manager will take a long position in $S^k$, with the { allocation fraction} of the long position increasing as $\theta_k$ grows. Regardless of the case, as more fund managers participate in the competition, { $\beta^{k,*}$} approaches the optimal solution in the classical framework, given by
\begin{eqnarray}
	\beta^{k,*}_{\infty}:=\lim\limits_{n\rightarrow +\infty}\beta^{k,*} = \frac{\delta_k\left(\mu_k\sigma-\mu\sigma_k\right)}{\left(1+\delta_k\right)\sigma\nu_k^2}.\label{limit1}
\end{eqnarray}

{ For the parameters of the risky assets and the individual's own risk aversion coefficient,} it is straightforward to verify that
\begin{eqnarray}
	&&\frac{\partial\beta^{k,*}}{\partial\mu}=\frac{-\delta_k\sigma_k}{\left(1+\delta_k-\frac{\theta_k}{n}\right)\sigma\nu_k^2},\qquad
	\frac{\partial\beta^{k,*}}{\partial\sigma}=\frac{\delta_k\mu\sigma_k}{\left(1+\delta_k-\frac{\theta_k}{n}\right)\sigma^2\nu_k^2},\nonumber\\
	&&\frac{\partial\beta^{k,*}}{\partial\mu_k}=\frac{\delta_k}{\left(1+\delta_k-\frac{\theta_k}{n}\right)\nu_k^2},\qquad\;\;\;
	\frac{\partial\beta^{k,*}}{\partial\sigma_k}=\frac{-\delta_k\mu}{\left(1+\delta_k-\frac{\theta_k}{n}\right)\sigma\nu_k^2},\nonumber\\
	&&\frac{\partial\beta^{k,*}}{\partial\nu_k}=\frac{-2\delta_k\left(\mu_k\sigma-\mu\sigma_k\right)}{\left(1+\delta_k-\frac{\theta_k}{n}\right)\sigma\nu_k^3},\qquad
	\frac{\partial\beta^{k,*}}{\partial\delta_k}=\frac{\left(1-\frac{\theta_k}{n}\right)\left(\mu_k\sigma-\mu\sigma_k\right)}{\left(1+\delta_k-\frac{\theta_k}{n}\right)^2\sigma\nu_k^2}.\label{partial2}
\end{eqnarray}

For case 1, the sensitivity of $\beta^{k,*}$ to various parameters is as follows:, $\frac{\partial\beta^{k,*}}{\partial\mu}<0,\frac{\partial\beta^{k,*}}{\partial\sigma}>0,\frac{\partial\beta^{k,*}}{\partial\mu_k}>0,\frac{\partial\beta^{k,*}}{\partial\sigma_k}<0,\frac{\partial\beta^{k,*}}{\partial\nu_k}>0$ and $\frac{\partial\beta^{k,*}}{\partial\delta_k}<0$. Thus, $\beta^{k,*}$ increases as $\mu$ decreases, as $\sigma$ increases, as $\mu_k$ increases, as $\sigma_k$ decreases, as $\nu_k$ increases, and as $\delta_k$ decreases.

For case 2, the sensitivities of $\beta^{k,*}$ to various parameters are: $\frac{\partial\beta^{k,*}}{\partial\mu}<0,\frac{\partial\beta^{k,*}}{\partial\sigma}>0,\frac{\partial\beta^{k,*}}{\partial\mu_k}>0,\frac{\partial\beta^{k,*}}{\partial\sigma_k}<0,\frac{\partial\beta^{k,*}}{\partial\nu_k}<0$ and $\frac{\partial\beta^{k,*}}{\partial\delta_k}>0$. Thus, $\beta^{k,*}$ increases as $\mu$ decreases, as $\sigma$ increases, as $\mu_k$ increases, as $\sigma_k$ decreases, as $\nu_k$ decreases, and as $\delta_k$ increases.

For case 3, the sensitivities of $\beta^{k,*}$ to various parameters are: $\frac{\partial\beta^{k,*}}{\partial\mu}>0,\frac{\partial\beta^{k,*}}{\partial\sigma}<0,\frac{\partial\beta^{k,*}}{\partial\mu_k}>0,\frac{\partial\beta^{k,*}}{\partial\sigma_k}<0,\frac{\partial\beta^{k,*}}{\partial\nu_k}<0$ and $\frac{\partial\beta^{k,*}}{\partial\delta_k}>0$. $\beta^{k,*}$ increases as $\mu$ increases, increases as $\sigma$ decreases, increases as $\mu_k$ increases, increases as $\sigma_k$ decreases, increases as $\nu_k$ decreases, and increases as $\delta_k$ increases.

In general, the greater the excess mean rate of return $\mu_k$ of $S^k$ or the smaller the volatility coefficient $\sigma_k$ of $S^k$, the larger the { allocation fraction} of investment in $S^k$. When $S^*$ and $S^k$ are negatively correlated, the $k$-th fund manager must take a long position in $S^k$, with the { allocation fraction} increasing as the excess mean rate of return $\mu$ of $S^*$ rises or the volatility coefficient $\sigma$ of $S^*$ decreases. Conversely, when $S^*$ and $S^k$ are positively correlated, the $k$-th fund manager will short $S^k$ if the virtual Sharpe ratio is low and go long if it is high. The effects of $\mu$ and $\sigma$ of $S^*$ on the { allocation fraction} in $S^k$ are opposite to those in the case of negative correlation between $S^*$ and $S^k$. Additionally, the second volatility coefficient $\nu_k$ of $S^k$ and the risk aversion coefficient $\delta_k$ do not determine whether the { allocation fraction} of investment in $S^k$ is positive or negative but influence its absolute value. Specifically, a smaller $\nu_k$ or a larger $\delta_k$ leads to a larger { allocation fraction} of short or long positions in $S^k$.

{ 

}

{
\subsubsection{{\bf Analysis of \( \alpha^{k,*} \)}}
Let us analyze the { allocation fraction} of investment in \( S^* \) for the \( k \)-th fund manager, denoted as \( \alpha^{k,*} \). First, we examine the parameters of other fund managers, including their competition weight parameters and risk aversion coefficients. Next, we consider the parameters related to the individual fund manager, such as the parameters associated with the public risky asset \( S^* \) and the personal risky asset \( S^k \), as well as the individual's competition weight parameter \( \theta_k \) and risk aversion coefficient \( \delta_k \). The analysis of the risk aversion coefficient requires numerical plotting. Finally, we explore the scenario of an infinite number of fund managers, which leads to the mean-field game (MFG) framework discussed in the next subsection.
}

The { allocation fraction} $\alpha^{k,*}$ is influenced not only by the fund manager's own parameters but also by the parameters of other fund managers. Specifically, for any $i \neq k$,
\begin{eqnarray}
	\frac{\partial\alpha^{k,*}}{\partial\theta_i}&=&\frac{\theta_k\mu}{\left(1+\delta_k\right)\sigma^2}\frac{\frac{1}{n}\sum_{j=1}^{n}\frac{\delta_j}{1+\delta_j}}{\left(1-\frac{1}{n}\sum_{j=1}^{n}\frac{\theta_j}{1+\delta_j}\right)^2}\frac{1}{n\left(1+\delta_i\right)}>0,\label{partial3}\\
	\frac{\partial\alpha^{k,*}}{\partial\delta_i}&=&\frac{\theta_k\mu}{\left(1+\delta_k\right)\sigma^2}\frac{1-\frac{1}{n}\sum_{j=1}^{n}\frac{\theta_j}{1+\delta_j}-\frac{1}{n}\sum_{j=1}^{n}\frac{\delta_j}{1+\delta_j}\theta_i}{n\left(1+\delta_i\right)^2\left(1-\frac{1}{n}\sum_{j=1}^{n}\frac{\theta_j}{1+\delta_j}\right)^2}\nonumber\\
	&\ge& \frac{\theta_k\mu}{\left(1+\delta_k\right)\sigma^2}\frac{1-\max_{j}\theta_j}{n\left(1+\delta_i\right)^2\left(1-\frac{1}{n}\sum_{j=1}^{n}\frac{\theta_j}{1+\delta_j}\right)^2} >0.\label{partial4}
\end{eqnarray} 
Any increase in the competition weight parameter $\theta_i$ or the risk aversion coefficient $\delta_i$ of any other fund manager $i$ will lead to an increase in the { allocation fraction} of the $k$-th fund manager's investment in the public risk asset $S^*$.

The effect of the parameters related to the $k$-th fund manager on $\alpha^{k,*}$ is complex, particularly in case 2. Therefore, we will analyze only case 1, where $\sigma_k > 0$ and $\frac{\mu}{\sigma} > \frac{\mu_k}{\sigma_k}$, and case 3, where $\sigma_k < 0$.

\begin{eqnarray}
	\frac{\partial\alpha^{k,*}}{\partial\mu}&=&\frac{\delta_k\sigma_k^2}{\left(1+\delta_k-\frac{\theta_k}{n}\right)\sigma^2\nu_k^2}+\frac{\delta_k}{\left(1+\delta_k\right)\sigma^2}+\frac{\theta_k}{\left(1+\delta_k\right)\sigma^2}\frac{\frac{1}{n}\sum_{j=1}^{n}\frac{\delta_j}{1+\delta_j}}{1-\frac{1}{n}\sum_{j=1}^{n}\frac{\theta_j}{1+\delta_j}}>0,\nonumber\\
	\frac{\partial\alpha^{k,*}}{\partial\sigma}&=&\frac{-2\delta_k\mu\sigma_k^2+\delta_k\mu_k\sigma_k\sigma}{\left(1+\delta_k-\frac{\theta_k}{n}\right)\sigma^3\nu_k^2}-\frac{2\delta_k\mu}{\left(1+\delta_k\right)\sigma^3}-\frac{2\mu\theta_k}{\left(1+\delta_k\right)\sigma^3}\frac{\frac{1}{n}\sum_{j=1}^{n}\frac{\delta_j}{1+\delta_j}}{1-\frac{1}{n}\sum_{j=1}^{n}\frac{\theta_j}{1+\delta_j}}<0,\nonumber\\
	\frac{\partial\alpha^{k,*}}{\partial\mu_k}&=&\frac{-\delta_k\sigma_k}{\left(1+\delta_k-\frac{\theta_k}{n}\right)\sigma\nu_k^2}\left\{
	\begin{aligned}
		<0\quad case \quad 1\\
		>0\quad case \quad 3
	\end{aligned}
	\right.,\nonumber\\
	\frac{\partial\alpha^{k,*}}{\partial\sigma_k}&=&\frac{\delta_k\left(2\mu\sigma_k-\mu_k\sigma\right)}{\left(1+\delta_k-\frac{\theta_k}{n}\right)\sigma^2\nu_k^2}\left\{
	\begin{aligned}
		>0\quad case\quad  1\\
		<0\quad case \quad 3
	\end{aligned}
	\right.,\nonumber\\
	\frac{\partial\alpha^{k,*}}{\partial\nu_k}&=&\frac{-2\delta_k\sigma_k\left(\mu\sigma_k-\mu_k\sigma\right)}{\left(1+\delta_k-\frac{\theta_k}{n}\right)\sigma^2\nu_k^3}<0,\nonumber\\
	\frac{\partial\alpha^{k,*}}{\partial\theta_k}&=&\frac{\delta_k\sigma_k\left(\mu\sigma_k-\mu_k\sigma\right)}{n\left(1+\delta_k-\frac{\theta_k}{n}\right)^2\sigma^2\nu_k^2}+\frac{\mu}{\left(1+\delta_k\right)\sigma^2}\frac{\frac{1}{n}\sum_{j=1}^{n}\frac{\delta_j}{1+\delta_j}}{1-\frac{1}{n}\sum_{j=1}^{n}\frac{\theta_j}{1+\delta_j}},\nonumber\\
	&&+\frac{\mu\theta_k}{n\left(1+\delta_k\right)^2\sigma^2}\frac{\frac{1}{n}\sum_{j=1}^{n}\frac{\delta_j}{1+\delta_j}}{\left(1-\frac{1}{n}\sum_{j=1}^{n}\frac{\theta_j}{1+\delta_j}\right)^2}>0.\label{partial5}
\end{eqnarray}

Regardless of whether it is case 1 or case 3, the { allocation fraction} $\alpha^{k,*}$ of the investment in $S^*$ increases as the excess mean rate of return $\mu$ of $S^*$ increases, as the volatility coefficient $\sigma$ of $S^*$ decreases, as the second volatility coefficient $\nu_k$ of $S^k$ decreases, and as the competition weight parameter $\theta_k$ of the $k$-th fund manager increases. In case 1, $\alpha^{k,*}$ also increases as the excess mean rate of return $\mu_k$ of $S^k$ decreases and as the first volatility coefficient $\sigma_k$ of $S^k$ increases. In contrast, in case 3, the effects are reversed.

Further, for analyzing the impact of $\delta_k$ in cases 1 and 3, and the effects of both $\delta_k$ and $\theta_k$ in case 2, we consider two special scenarios: one with only two fund managers and the other with an infinite number of fund managers.

In the case of two fund managers, the analysis needs to be conducted numerically. We take $n=2,\left(\mu,\sigma\right)=\left(1,1\right),\theta_2=0.5,\delta_2=5,\delta_1\in\left(0,20\right)$, $\theta_1=0.5,\left(\mu_1,\sigma_1,\nu_1\right)=\left(2,3,3\right)$ in case 1, $\theta_1=0.5,\left(\mu_1,\sigma_1,\nu_1\right)=\left(2,-1,1\right)$ in case 3 and $\theta_1\in\left(0,1\right),\left(\mu_1,\sigma_1,\nu_1\right)=\left(3,2,2\right)$ in case 2. Figure~\ref{Third1and2} illustrates that $\alpha^{1,*}$ increases as the risk aversion coefficient $\delta_1$ increases, regardless of whether it is case 1 or case 3. However, in case 2, Figure~\ref{Third3} shows that $\alpha^{1,*}$ increases with $\delta_1$ when $\theta_1$ is small, while $\alpha^{1,*}$ decreases with $\delta_1$ when $\theta_1$ is large. Additionally, $\alpha^{1,*}$ increases as the competition weight parameter $\theta_1$ increases.

\begin{figure}[H]
	\centering
	\includegraphics[scale=0.71]{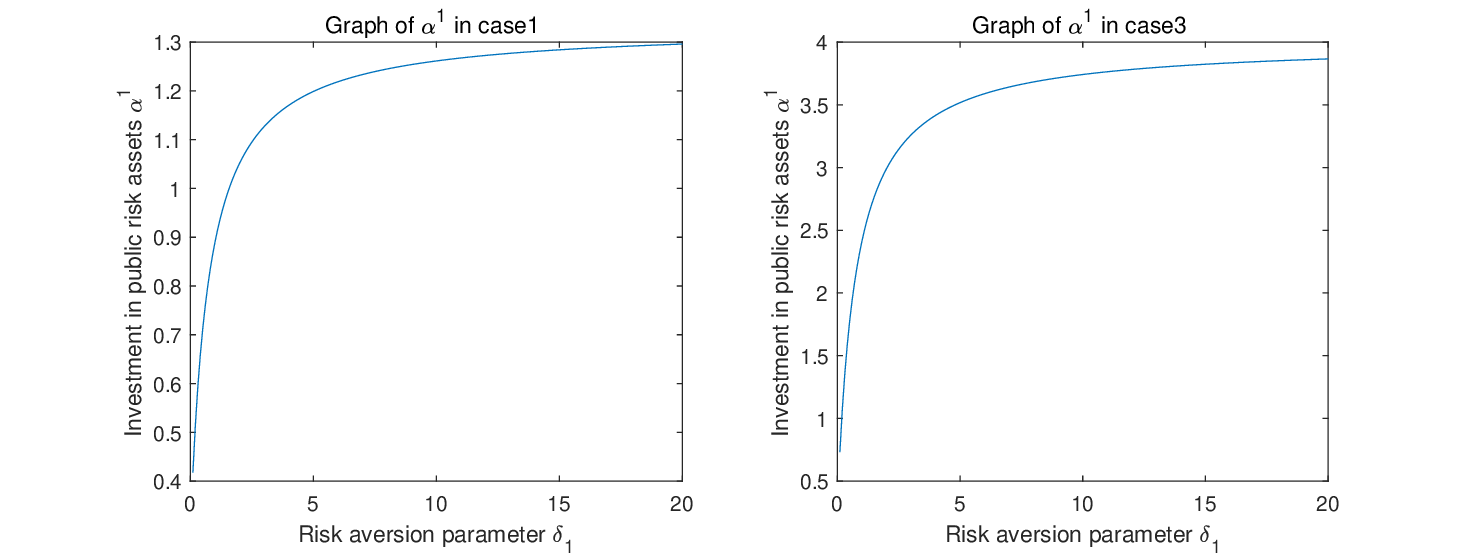}
	\caption{Investment in public risk assets $\alpha^{1,*}$ in case 1 and case 3}
	\label{Third1and2}
\end{figure} 

\begin{figure}[H]
	\centering
	\includegraphics[scale=0.89]{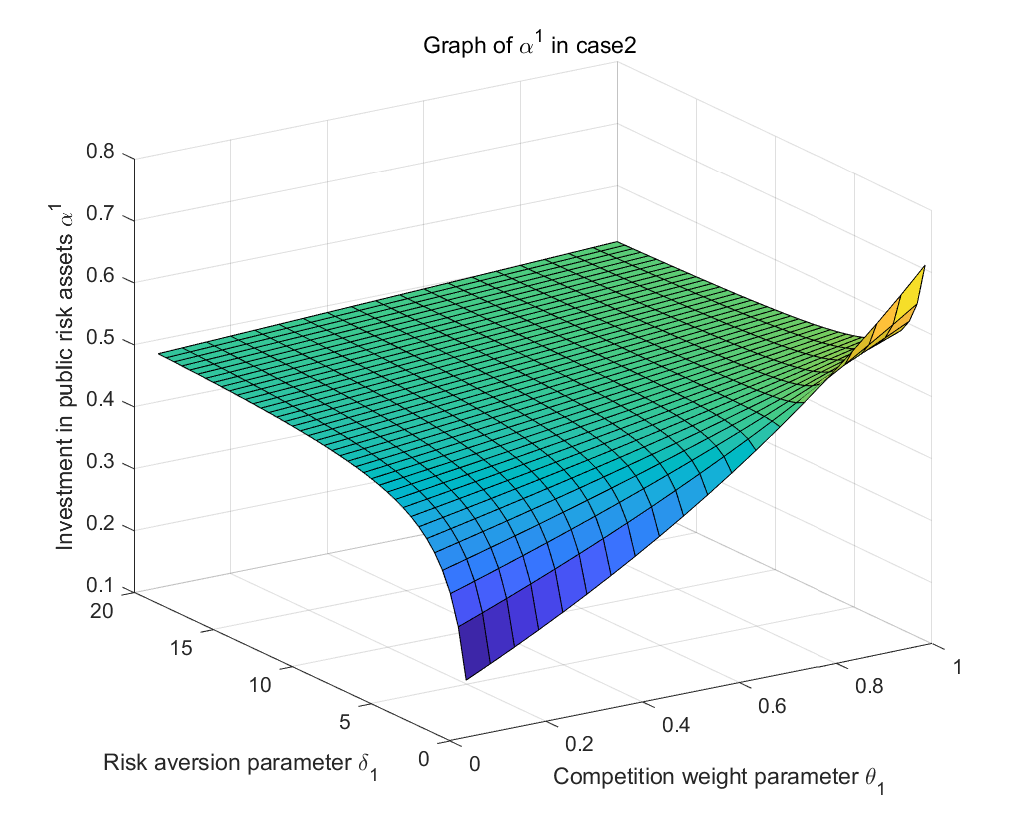}
	\caption{Investment in public risk assets $\alpha^{1,*}$ in case 2}
	\label{Third3}
\end{figure}

{
\subsubsection{{\bf The case of an infinite number of fund managers}}\label{infity_case}

Now, we consider the case of an infinite number of fund managers, i.e., \( n \to +\infty \). This is an informal argument designed to build intuition and motivate the upcoming definition. For the \( k \)-th fund manager, we define the type vector
\[
\phi_k = \left( \delta_k, \theta_k, \mu_k, \sigma_k, \nu_k \right),
\]
which induces an empirical measure known as the type distribution. This type distribution is a probability measure on the type space
\[
\mathcal{Z} = (0, \infty) \times [0, 1] \times (0, \infty) \times \mathbb{R} \times (0, \infty),
\]
given by
\[
m_n(A) = \frac{1}{n} \sum_{i=1}^{n} 1_A(\phi_k), \quad \text{for Borel sets} \, A \subset \mathcal{Z}.
\]
We then observe that the constants \( D_n \) (Eq.~(\ref{D1})) are obtained by integrating appropriate functions under \( m_n \).

Next, assume that as the number of fund managers increases and \( n \to +\infty \), the empirical measure \( m_n \) converges weakly to a measure \( m \), in the sense that
\[
\int_{\mathcal{Z}} f \, \mathrm{d}m_n \to \int_{\mathcal{Z}} f \, \mathrm{d}m \quad \text{for every bounded continuous function} \, f \, \text{on} \, \mathcal{Z}.
\]
For instance, if the \( \phi_k \)'s are independent and identically distributed (i.i.d.) samples from the distribution \( m \), this convergence holds almost surely. Let \( \phi = (\delta, \theta, \mu, \sigma, \nu) \) be a random variable following the limiting distribution \( m \). In this case, we expect \( D_n \) (as defined in Eq.~(\ref{D1})) to converge to
\[
\lim_{n \to +\infty} D_n = \frac{\mathbb{E} \frac{\delta}{1+\delta}}{1 - \mathbb{E} \frac{\theta}{1+\delta}} \frac{\mu}{\sigma} := L \frac{\mu}{\sigma},
\]
where
\begin{equation}
L := \frac{\mathbb{E} \frac{\delta}{1+\delta}}{1 - \mathbb{E} \frac{\theta}{1+\delta}}. \label{L}
\end{equation}
}


Therefore, when a sufficiently large number of fund managers are participating in the competition, the { allocation fraction} of the $k$-th fund manager in $S^*$ converges to a specific value.
\begin{eqnarray}
	\alpha^{k,*}_{\infty}:=\lim\limits_{n\rightarrow +\infty}\alpha^{{k,*}}=\frac{\delta_k\sigma_k\left(\mu\sigma_k-\mu_k\sigma\right)}{\left(1+\delta_k\right)\sigma^2\nu_k^2}+\frac{\mu\delta_k+\theta_k\mu L}{\left(1+\delta_k\right)\sigma^2}.\label{limit2}
\end{eqnarray}

By taking the derivative and rearranging the terms,
\begin{eqnarray}
	\frac{\partial\alpha^{k,*}_{\infty}}{\partial\theta_k}&=&\frac{\mu L}{\left(1+\delta_k\right)\sigma^2}>0,\nonumber\\
	\frac{\partial\alpha^{k,*}_{\infty}}{\partial\delta_k}&=& \frac{1}{\left(1+\delta_k\right)^2}\left[\frac{\sigma_k\left(\mu\sigma_k-\mu_k\sigma\right)}{\sigma^2\nu_k^2} + \frac{\mu\left(1-\theta_k L\right)}{\sigma^2}\right]\nonumber\\
	&\ge& \frac{1}{\left(1+\delta_k\right)^2}\left[\frac{\sigma_k\left(\mu\sigma_k-\mu_k\sigma\right)}{\sigma^2\nu_k^2} + \frac{\mu\left(1-\theta_k\right)\left(1-\mathbb{E}\frac{1}{1+\delta}\right)}{\sigma^2\left(1-\mathbb{E}\frac{\theta}{1+\delta}\right)}\right]>0.\nonumber
\end{eqnarray}
The analysis of $\theta_k$ and $\delta_k$ is the same as the analysis in the case of two fund managers.

{ 
\begin{remark}
        The MFG framework, which is defined later, provides a structure for deriving the limiting strategy as the outcome of a self-contained equilibrium problem. Instead of directly modeling a continuum of fund managers, we model a single representative fund manager, regarded as randomly selected from the population. Intuitively, this setup represents a game involving a continuum of fund managers with a type distribution $m$.
\end{remark}
}

{ 
	\begin{remark}\label{remark_new}

Let’s consider \( \alpha^{k,*} \) and \( \beta^{k,*} \) jointly. When \( S^* \) and \( S^k \) are negatively correlated, both assets are long-bought, indicating a hedging and complementary relationship. An increase in the quality of either asset (i.e., higher Sharpe or virtual Sharpe ratio) results in a higher { allocation fraction} in that asset.

When \( S^* \) and \( S^k \) are positively correlated, they are in a competitive relationship. As the quality of one asset improves, its  { allocation fraction} increases while the other’s generally decreases. Specifically, if the virtual Sharpe ratio of \( S^k \) is lower than the Sharpe ratio of \( S^* \), \( S^* \) is long-bought and \( S^k \) is short-sold. If the virtual Sharpe ratio of \( S^k \) exceeds that of \( S^* \), \( S^k \) is long-bought, and \( S^* \) can either be long-bought or short-sold.
	\end{remark}
}

\subsection{{{\bf The MFG under exponential utility criterion}}}\label{Themfg}
In this subsection, we revise the financial model to construct the Mean Field Game (MFG) under the exponential utility criterion, assuming an infinite number of fund managers. In this case, the parameters of the fund managers are treated as random variables. To formulate the MFG, we now assume that the probability space $(\Omega, \mathcal{F}, \mathbb{P})$ supports an additional independent one-dimensional Brownian motion $W$, as well as a random variable (referred to as the type vector)
\begin{eqnarray}
	\phi = \left(\delta,\theta,\widehat{\mu},\widehat{\sigma},\widehat{\nu}\right)\nonumber
\end{eqnarray}
with the values in the space $\left(0,+\infty\right)\times\left[0,1\right)\times\left(0,+\infty\right)\times\mathbb{R}\times\left(0,+\infty\right)$, which is independent of $W$ and $B$. The distribution of this random variable $\phi$ is called the type distribution. In fact, $\left(\delta,\theta,\widehat{\mu},\widehat{\sigma},\widehat{\nu}\right)$ is exactly the randomization of the parameters $\left(\delta_k,\theta_k,\mu_k,\sigma_k,\nu_k\right)$ of a particular fund manager. The fund manager with type vector $\phi$ can invest in $S^0$, $S^*$, and its exclusive asset $\widehat{S}$ satisfying
\begin{eqnarray}
	\frac{\rd \widehat{S}_t}{\widehat{S}_t} &=& \left(\kappa+\widehat{\mu}\right) \rd t + \widehat{\sigma} \rd B_t + \widehat{\nu} dW_t. \nonumber
\end{eqnarray}

{ Let \( \mathcal{F}^{MF} = \left\{ \mathcal{F}^{MF}_t \right\}_{t \in [0, \mathcal{T}]} \) denote the smallest filtration satisfying the usual conditions, such that \( \phi \) is \( \mathcal{F}^{MF}_0 \)-measurable and both \( W \) and \( B \) are adapted. Additionally, let \( \mathcal{F}^{B} = \left\{ \mathcal{F}^{B}_t \right\}_{t \in [0, \mathcal{T}]} \) represent the natural filtration generated by \( B \).}

{ The logarithmic return $\left\{R_t^\pi\right\}_{t\in \left[0,\mathcal{T}\right]}$ of a fund manager with type vector $\phi$ evolves as }
\begin{eqnarray}
	\rd R_t^\pi &=& \left[\left(\kappa + \mu \alpha_t + \widehat{\mu}\beta_t \right) - \frac{1}{2}\left(\sigma \alpha_t + \widehat{\sigma}\beta_t\right)^2 - \frac{1}{2}\left(\widehat{\nu}\beta_t\right)^2\right] \rd t \nonumber \\&&+ \left(\sigma \alpha_t + \widehat{\sigma}\beta_t \right) \rd B_t + \widehat{\nu}\beta_t \rd W_t,\label{Rt}\\
	R_0^\pi &=& 0.\nonumber
\end{eqnarray}
A strategy $\pi=\left(\alpha,\beta\right)$ is admissible if it is $\mathcal{F}^{MF}$ progressively measurable and satisfies {
\begin{eqnarray}
	\mathbb{E}\int_{0}^{\mathcal{T}}\left(\alpha^{2}_t+\beta^{2}_t\right) \rd t < \infty.\label{admissible}
\end{eqnarray} 
For an admissible strategy \( \eta \), define the \( \mathcal{F}^{B}_\mathcal{T} \)-measurable random process \( \left\{ \overline{R}_t = \mathbb{E}\left[ R^{\eta}_t \mid \mathcal{F}^{B}_\mathcal{T} \right] \right\}_{t \in [0, \mathcal{T}]} \), where \( \left\{ R^{\eta}_t \right\}_{t \in [0, \mathcal{T}]} \) is the logarithmic return process given by Eq.~(\ref{Rt}) corresponding to the strategy \( \eta \). For a fixed \( \eta \), the representative fund manager does not influence \( \overline{R} \), as they are just one fund manager in a continuum.

For any admissible strategy \( \pi \), define the excess logarithmic return process \( \{ Z_t := R^{\pi}_t - \theta \overline{R}_t \}_{t \in [0, \mathcal{T}]} \). The objective of the representative fund manager is to maximize the expected exponential utility of the excess logarithmic return at the terminal time \( Z_\mathcal{T} = R_\mathcal{T}^\pi - \theta \overline{R}_\mathcal{T} \):
\begin{eqnarray}
	J\left(\pi\right):=\mathbb{E}\left\{-\exp\left[-\frac{1}{\delta}Z_\mathcal{T}\right]\right\}.\label{Jpi}
\end{eqnarray} }

\begin{definition}\label{MFE}
		
	{
An admissible strategy \( \pi^* \) is a mean-field equilibrium (MFE) if it is optimal for the optimization problem of maximizing \( J(\pi) \), as shown in Eq.~(\ref{Jpi}), with \( \overline{R}_t = \mathbb{E}\left[ R^{\pi^*}_t \mid \mathcal{F}^{B}_\mathcal{T} \right] \) for \( t \in [0, \mathcal{T}] \).

A constant MFE is an \( \mathcal{F}^{MF}_0 \)-measurable two-dimensional random variable \( \pi^* \) such that \( \pi_t = \pi^* \) for \( t \in [0, \mathcal{T}] \), and this is also a MFE.
    }
\end{definition}

\begin{theorem}\label{the2}
	There exists a unique constant MFE, given by
	\begin{eqnarray}
		\alpha^*&=&\frac{\delta\widehat{\sigma}\left(\mu\widehat{\sigma}-\widehat{\mu}\sigma\right)}{\left(1+\delta\right)\sigma^2\widehat{\nu}^2}+\frac{\mu\delta+\theta\mu L}{\left(1+\delta\right)\sigma^2},\label{alpha1}\\
		\beta^{*}&=&\frac{\delta\left(\widehat{\mu}\sigma-\mu\widehat{\sigma}\right)}{\left(1+\delta\right)\sigma\widehat{\nu}^2}\label{beta1},\nonumber
	\end{eqnarray}
	where $L$ is defined by  Eq.~(\ref{L}).
\end{theorem}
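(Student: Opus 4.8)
The plan is to solve the mean field game by the standard two-step fixed-point scheme: first solve the representative manager's control problem for a fixed, externally given benchmark $\overline{R}$, and then impose the consistency requirement that this benchmark be reproduced by the resulting optimal strategy. The crucial preliminary observation is that the only feature of $\overline{R}$ entering the manager's objective is its diffusion coefficient against $B$ (the $\theta\overline{R}$ term shifts the drift of $Z$ by a constant and its $B$-volatility, but a constant drift shift does not affect the optimizer). I would therefore introduce a single scalar unknown $a$, the diffusion coefficient of $\overline{R}$ with respect to $B$, carry it through the optimization, and pin it down at the very end by self-consistency.

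First I would fix a candidate benchmark of the form $\rd\overline{R}_t = \overline{m}\,\rd t + a\,\rd B_t$, verifying a posteriori that the equilibrium benchmark indeed has this affine-in-$B$ form. For a fixed type $\phi$, the excess return $Z_t = R^\pi_t - \theta\overline{R}_t$ then has linear dynamics driven by $B$ and $W$. I would solve $\sup_\pi J(\pi)$ either via the HJB equation with the exponential ansatz $V(t,z) = -\exp(-z/\delta + h(t))$, $h(\mathcal{T})=0$, or equivalently—since for a constant strategy $Z_\mathcal{T}$ is conditionally Gaussian given $\phi$—by evaluating the Gaussian moment generating function and maximizing $\mathbb{E}[Z_\mathcal{T}\mid\phi] - \tfrac{1}{2\delta}\mathrm{Var}[Z_\mathcal{T}\mid\phi]$. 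Either route reduces the problem to maximizing a strictly concave quadratic in $(\alpha,\beta)$, with negative definiteness of the Hessian following from $\sigma>0$ and $\widehat{\nu}>0$, whose unique critical point is time-independent. The first-order conditions decouple: the $\alpha$-equation pins down the total $B$-exposure $P := \sigma\alpha + \widehat{\sigma}\beta = (\mu\delta + \sigma\theta a)/(\sigma(1+\delta))$, and substituting this into the $\beta$-equation gives $\beta^* = \delta(\widehat{\mu}\sigma - \mu\widehat{\sigma})/((1+\delta)\sigma\widehat{\nu}^2)$, independent of $a$, with $\alpha^*$ recovered from $P$.

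Next I would compute the benchmark actually generated by this candidate, $\overline{R}_t = \mathbb{E}[R^{\pi^*}_t\mid\mathcal{F}^B_\mathcal{T}]$. The key step is the conditional expectation: because the type $\phi$ and the idiosyncratic motion $W$ are independent of $B$, the $W$-martingale part and the type-dependent drift average out under $\mathbb{E}[\,\cdot\mid\mathcal{F}^B_\mathcal{T}]$, and the surviving coefficient of $B_t$ equals $\mathbb{E}[\sigma\alpha^* + \widehat{\sigma}\beta^*] = \mathbb{E}[P^*]$. The self-consistency condition is therefore exactly $a = \mathbb{E}[P^*]$. Since $P^*$ is affine in $a$, taking expectations over the type distribution yields the single linear equation $a\bigl(1 - \mathbb{E}\tfrac{\theta}{1+\delta}\bigr) = \tfrac{\mu}{\sigma}\,\mathbb{E}\tfrac{\delta}{1+\delta}$, whose unique solution is $a = (\mu/\sigma)L$ with $L$ as in Eq.~(\ref{L}); the denominator is nonzero because $\theta\in[0,1)$ forces $\mathbb{E}\tfrac{\theta}{1+\delta}<1$. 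Substituting $\sigma\theta a = \theta\mu L$ back into $P^*$ and hence into $\alpha^*$ produces the stated formula and simultaneously confirms the posited affine form of $\overline{R}$.

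The main obstacle, more delicate than the algebra, is the rigorous treatment of the two nonstandard features. The first is the conditional-expectation step defining $\overline{R}$: I must argue carefully, using independence of $(\phi,W)$ from $B$ together with the tower property, that conditioning on $\mathcal{F}^B_\mathcal{T}$ annihilates the idiosyncratic and type-averaged contributions and leaves only the mean $B$-exposure—this is precisely what converts the individual optimizer into the population constant $L$. The second is the verification that the time-independent critical point is genuinely optimal among all $\mathcal{F}^{MF}$-progressively measurable admissible strategies, not merely among constant ones; this requires checking the admissibility and integrability needed to validate the exponential ansatz (or martingale argument), after which strict concavity yields uniqueness of the optimizer and linearity of the fixed-point equation yields uniqueness of $a$, together establishing the unique constant MFE.
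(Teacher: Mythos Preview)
Your proposal is correct and follows essentially the same route as the paper: fix a constant benchmark, solve the representative agent's problem via the HJB equation with an exponential ansatz to obtain a constant optimizer depending on the benchmark's $B$-diffusion coefficient, then close the fixed point by the linear equation for that coefficient. The only cosmetic difference is that you name the unknown scalar $a$ from the outset, whereas the paper carries the symbol $\mathbb{E}[\tilde{\sigma}^T\eta]$ through the computation before solving for it; your optional Gaussian-MGF shortcut is a nice heuristic but, as you note, the verification among all admissible strategies still rests on the HJB argument the paper invokes.
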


\begin{proof}
	See Appendix \ref{proof2}.
\end{proof}
\begin{remark}
	The constant MFE $\left(\alpha^*, \beta^*\right)$ has the same expression as in Eqs.~(\ref{limit2}) and (\ref{limit1}), which represents the limit of the equilibrium strategy in Definition \ref{def1} as the number of fund managers approaches infinity.
\end{remark}
\begin{remark}
	By setting $\theta = 0$, we obtain the optimal investment ratio for an optimization problem without competition, which reveals some interesting phenomena. The { allocation fraction} of investment in the private risky asset, $\beta^*$, is unaffected by competition and reflects the optimal result of the optimization problem. Conversely, the { allocation fraction} invested in the public risk asset, $\alpha^*$, can be divided into two components: one that disregards competition and one that addresses competitive risk.
\end{remark}

\vskip 10pt
\section{{{\bf Mean-Variance criterion}}} \label{Mean-Variance criterion}
\vskip 5pt
In this section, we assume that each fund manager has a mean-variance preference for the excess logarithmic return, so the payoff of the $k$-th fund manager is given by { 
\begin{eqnarray}
	J_k\left(\pi^k;\pi^{-k}|t,r_1,\cdots,r_n\right):=\mathbb{E}_t\left(R^{k,\pi^k}_\mathcal{T}-\frac{\theta_k}{n}\sum_{i=1}^{n}R^{i,\pi^i}_\mathcal{T}\right)-\frac{\gamma_k}{2}\mathbb{V}_t\left(R^{k,\pi^k}_\mathcal{T}-\frac{\theta_k}{n}\sum_{i=1}^{n}R^{i,\pi^i}_\mathcal{T}\right),\nonumber 
\end{eqnarray} }
where $\gamma_k > 0$ is the risk aversion parameter for the $k$-th fund manager, and $\mathbb{E}_t$ and $\mathbb{V}_t$ represent the conditional expectation and conditional variance given the condition $\left(R^{1,\pi^1}_t, \cdots, R^{n,\pi^n}_t\right) = \left(r_1, \cdots, r_n\right)$.

The mean-variance criterion is time-inconsistent, meaning Bellman's principle does not hold and the standard stochastic dynamic programming method cannot be applied. To address this time-inconsistent game, we adopt the time-consistent Nash equilibrium strategy proposed by \cite{bjork2017time}. We require that, at any moment, a fund manager unilaterally changing the investment strategy will not achieve better results in the very short term. The time-consistent equilibrium and the constant time-consistent equilibrium are defined strictly below.

\begin{definition}\label{def2}
	{ A strategy $\left(\pi^{1},\cdots,\pi^{n}\right)$ is admissible if, for any $k=1,\cdots,n$, $\pi^k$ is $\{\mathcal{F}_t\}_{0\leq t\leq \mathcal{T}} $ progressively measurable and satisfies
	\begin{eqnarray}
		\mathbb{E}\int_{0}^{\mathcal{T}}\left(\alpha^{k2}_t+\beta^{k2}_t\right) \rd t < +\infty.
	\end{eqnarray}
	
	An admissible strategy $\left(\pi^{1,*},\cdots,\pi^{n,*}\right)$ is a time-consistent equilibrium if, for any $k=1,\cdots,n$, any fixed initial state $\left(t, r_1, \cdots, r_n\right) \in [0, \mathcal{T}] \times \mathbb{R}^n$, and any admissible strategy $u$ of the $k$-th fund manager, }
	\begin{eqnarray}
		\liminf_{h\rightarrow 0^+} \frac{J_k\left(\pi^{k,*};\pi^{-k,*}|t,r_1,\cdots,r_n\right)-J_k\left(\pi^{k,h,u};\pi^{-k,*}|t,r_1,\cdots,r_n\right)}{h}\ge 0,\nonumber
	\end{eqnarray}
	where $\pi^{k,h,u}$ is defined by { 
	\begin{eqnarray}
		\pi^{k,h,u}_s:=\left\{
		\begin{aligned}
			&u_s,\quad\qquad t\le s < t+h,\\
			&\pi^{k,*}_s, \quad t+h\le s\le \mathcal{T}
		\end{aligned}
		\right.\nonumber
	\end{eqnarray} }
	with any $h\in\mathbb{R}^{+}$.

A constant time-consistent equilibrium is a time-consistent equilibrium in which, for each $k$, $\alpha^{k,*}$ and $\beta^{k,*}$ are constants over time.
\end{definition}
\begin{theorem}\label{the3}
	There exists a unique constant time-consistent equilibrium, given by
	\begin{eqnarray}
		\alpha^{k,*}&=&\frac{\sigma_k\left(\mu\sigma_k-\mu_k\sigma\right)}{\left(1+\gamma_k-\frac{\gamma_k\theta_k}{n}\right)\sigma^2\nu_k^2}+\frac{\mu+\theta_k\sigma K}{\left(1+\gamma_k\right)\sigma^2},\quad k=1,\cdots,n,\label{alp2}\\
		\beta^{k,*}&=&\frac{\mu_k\sigma-\mu\sigma_k}{\left(1+\gamma_k-\frac{\gamma_k\theta_k}{n}\right)\sigma\nu_k^2},\quad k=1,\cdots,n,\label{bet2}
	\end{eqnarray}
	where
	\begin{eqnarray}
		K:= \frac{\frac{1}{n}\sum_{i=1}^{n}\frac{1}{1+\gamma_k}\frac{\mu}{\sigma}}{1-\frac{1}{n}\sum_{i=1}^{n}\frac{\gamma_k\theta_k}{1+\gamma_k}}.\label{K1}
	\end{eqnarray}
\end{theorem}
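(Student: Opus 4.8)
The plan is to adapt the extended-HJB / local-optimality machinery for time-inconsistent control of \cite{bjork2017time} to this $n$-player setting, exploiting that we only seek \emph{constant} equilibria. Fix $k$ and freeze the other managers at constant strategies $\pi^{i,*}=(\alpha^{i,*},\beta^{i,*})$, $i\neq k$. Writing $a_k:=1-\frac{\theta_k}{n}$, the benchmark-adjusted terminal return is
\begin{eqnarray}
	Y^k_\mathcal{T}=R^{k,\pi^k}_\mathcal{T}-\frac{\theta_k}{n}\sum_{i=1}^n R^{i,\pi^i}_\mathcal{T}=a_k R^{k,\pi^k}_\mathcal{T}-\frac{\theta_k}{n}\sum_{i\neq k}R^{i,\pi^i}_\mathcal{T},\nonumber
\end{eqnarray}
and the crucial structural point is that manager $k$ influences its \emph{own} benchmark, so its control enters $Y^k_\mathcal{T}$ with the damped weight $a_k$ rather than $1$. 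Since all managers use constant strategies, the vector of returns $\mathbf{R}=(R^1,\dots,R^n)$ has constant drift and diffusion; hence, conditionally on $\mathbf{R}_t=\mathbf{r}$, $Y^k_\mathcal{T}$ is Gaussian with mean affine in $\mathbf{r}$ (constant gradient $c^k_k=a_k$, $c^k_i=-\frac{\theta_k}{n}$) and deterministic variance. Thus the equilibrium value function $V^k(t,\mathbf{r})$ and the auxiliary mean function $g^k(t,\mathbf{r})=\mathbb{E}^{\pi^*}_{t,\mathbf{r}}[Y^k_\mathcal{T}]$ are both affine in $\mathbf{r}$ with the same constant gradient, and this affine ansatz collapses the extended HJB system to a single static problem.

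Concretely, I would expand $J_k(\pi^{k,h,u};\pi^{-k,*}\mid t,\mathbf{r})$ in $h$ and show that the local inequality of Definition \ref{def2} is equivalent to demanding that $(\alpha^{k,*},\beta^{k,*})$ be the unique maximizer of the instantaneous mean-variance rate
\begin{eqnarray}
	\Phi_k(\alpha,\beta)=a_k\Big[\mu\alpha+\mu_k\beta-\tfrac12(\Sigma^k)^2-\tfrac12\nu_k^2\beta^2\Big]-\frac{\gamma_k}{2}\Big[\big(a_k\Sigma^k-\tfrac{\theta_k}{n}\textstyle\sum_{i\neq k}\Sigma^{i,*}\big)^2+a_k^2\nu_k^2\beta^2\Big],\nonumber
\end{eqnarray}
where $\Sigma^k=\sigma\alpha+\sigma_k\beta$, $\Sigma^{i,*}=\sigma\alpha^{i,*}+\sigma_i\beta^{i,*}$: the first bracket is $a_k$ times the drift of $R^k$ and the second is the quadratic-variation term of $g^k$, in which the self-benchmark effect reappears as the factor $a_k$ multiplying $\Sigma^k$ inside the square. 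Setting $\bar\Sigma:=\frac1n\sum_{i=1}^n\Sigma^{i,*}$, so that $a_k\Sigma^k-\frac{\theta_k}{n}\sum_{i\neq k}\Sigma^{i,*}=\Sigma^k-\theta_k\bar\Sigma$ at equilibrium, the two first-order conditions (after dividing by $a_k$) read
\begin{eqnarray}
	\mu-\sigma\Sigma^k-\gamma_k\sigma(\Sigma^k-\theta_k\bar\Sigma)&=&0,\nonumber\\
	\mu_k-\sigma_k\Sigma^k-\nu_k^2\beta-\gamma_k\sigma_k(\Sigma^k-\theta_k\bar\Sigma)-\gamma_k a_k\nu_k^2\beta&=&0.\nonumber
\end{eqnarray}

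Subtracting $\sigma_k/\sigma$ times the first condition from the second eliminates $\Sigma^k$ and yields $\beta^{k,*}$ as in Eq.~(\ref{bet2}), with denominator $1+\gamma_k a_k=1+\gamma_k-\frac{\gamma_k\theta_k}{n}$; the first condition then gives $\Sigma^k=\frac{\mu+\gamma_k\sigma\theta_k\bar\Sigma}{\sigma(1+\gamma_k)}$, and $\alpha^{k,*}=(\Sigma^k-\sigma_k\beta^{k,*})/\sigma$ produces Eq.~(\ref{alp2}) once $\bar\Sigma$ is known. The remaining, genuinely global, step is to close the fixed point for the aggregate $\bar\Sigma$: averaging the individual relation $\Sigma^i=\frac{\mu+\gamma_i\sigma\theta_i\bar\Sigma}{\sigma(1+\gamma_i)}$ over $i$ gives a single linear equation in $\bar\Sigma$ whose solution is exactly $\bar\Sigma=K$ in Eq.~(\ref{K1}). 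Here I would verify that the denominator $1-\frac1n\sum_i\frac{\gamma_i\theta_i}{1+\gamma_i}$ is strictly positive (it lies in $(0,1]$ because $0\le\theta_i\le1$ and $\frac{\gamma_i}{1+\gamma_i}<1$), so the closure is non-degenerate and $\bar\Sigma$, hence the whole profile, is uniquely pinned down.

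Uniqueness and the equilibrium inequality themselves both follow from strict concavity of $\Phi_k$, whose Hessian in $(\alpha,\beta)$ is negative definite since $\sigma,\nu_k>0$ and $a_k,\gamma_k>0$; therefore the stationary point is the unique global maximizer, any constant equilibrium must satisfy the first-order conditions, and a deterministic constant is trivially admissible in the sense of Definition \ref{def2}. I expect the main obstacle to be the rigorous reduction of the first two paragraphs—justifying that the time-consistent local inequality is governed by $\Phi_k$ and, above all, correctly propagating the self-benchmark factor $a_k$ through both the drift and the quadratic-variation term, since a weight slip there alters the denominators and breaks the match with the exponential-utility analogue of Theorem \ref{the1} (recovered formally by the substitution $\gamma_k\leftrightarrow 1/\delta_k$). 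The fixed-point closure is then routine once non-degeneracy is confirmed.
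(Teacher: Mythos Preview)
Your proposal is correct and follows essentially the same route as the paper: both reduce the time-consistent local inequality of Definition~\ref{def2} to the pointwise maximization of the instantaneous mean-variance rate (your $\Phi_k$ is, up to additive constants not depending on $\pi^k$, the paper's $H-\tfrac{\gamma_k}{2}G$), then solve the resulting strictly concave quadratic first-order conditions and close the scalar fixed point for the aggregate $\bar\Sigma=E=K$. The only cosmetic difference is that the paper obtains $\Phi_k$ by writing $J_k$ directly as an integral $\int_t^{\mathcal T}(H-\tfrac{\gamma_k}{2}G)\,\rd s$ (the log-return dynamics with constant controls make $\mathbb E_t$ and $\mathbb V_t$ explicit), whereas you reach the same functional via an affine ansatz for the extended-HJB pair $(V^k,g^k)$; either way the self-benchmark factor $a_k$ and the fixed-point closure are handled identically.
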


\begin{proof}
	See Appendix \ref{proof3}.
\end{proof}

\subsection{{{\bf Analysis about the constant time-consistent equilibrium $\left(\alpha^{k,*},\beta^{k,*}\right)$}}}
Similar to Section \ref{Exponential utility}, both the { allocation fraction} $\alpha^{k,*}$ of the investment in $S^*$ and the { allocation fraction} $\beta^{k,*}$ of the investment in $S^k$ are constant. The { allocation fraction} $\alpha^{k,*}$ is influenced by the parameters of other investment managers, including the competition weight parameter $\theta_i$ (for $i \neq k$) and the risk aversion coefficient $\delta_i$ (for $i \neq k$), but does not depend on the parameters associated with the risky assets $S^i$ (for $i \neq k$). In contrast, $\beta^{k,*}$ is not affected by the parameters of other investment managers but is influenced by its own competition weight parameter $\theta_k$. We also consider three cases: case 1, case 2, and case 3, as detailed in Section \ref{Exponential utility}.

{ Similar to Subsection \ref{infity_case}, we can formally obtain the result (the detailed process, including the definition of the type vector, type distribution, and weak convergence of measures, is omitted here):
\begin{eqnarray}
	\beta^{k,*}_{\infty}&:=&\lim\limits_{n\rightarrow +\infty}\beta^k =\frac{\mu_k\sigma-\mu\sigma_k}{\left(1+\gamma_k\right)\sigma\nu_k^2},\label{lim1}\\
	\alpha^{k,*}_{\infty}&:=&\lim\limits_{n\rightarrow +\infty}\alpha^{k} =\frac{\sigma_k\left(\mu\sigma_k-\mu_k\sigma\right)}{\left(1+\gamma_k\right)\sigma^2\nu_k^2}+\frac{\mu+\theta_k\mu R}{\left(1+\gamma_k\right)\sigma^2},\label{lim2}
\end{eqnarray}
where
\begin{eqnarray}
	R := \frac{\mathbb{E}\frac{1}{1+\gamma}}{1-\mathbb{E}\frac{\gamma\theta}{1+\gamma}},\label{R}
\end{eqnarray}
which is the same as Eqs.~ (\ref{limit1}) and (\ref{limit2}).}


{
The remainder of this section is organized as follows. First, we will examine the impact of one's own parameters. For more complex scenarios, numerical analysis and plotting will be required. Next, we will explore the influence of the parameters of other fund managers. Since the results in this section closely mirror those in Subsection \ref{analysis1}, we will only present the formulas and plots, providing explanations and notes only when necessary. Additionally, we will observe that the conclusions of Remark \ref{remark_new} also apply here.
}

The { allocation fraction} of the investment in $S^k$ denoted $\beta^{k,*}$, is also a modification of the equilibrium solution to the classical optimization problem without competition. We have
\begin{eqnarray}
	\frac{\partial\beta^{k,*}}{\partial\theta_k}&=&\frac{\gamma_k\left(\mu_k\sigma-\mu\sigma_k\right)}{n\left(1+\gamma_k-\frac{\gamma_k\theta_k}{n}\right)^2\sigma\nu_k^2},
\end{eqnarray}
which is the same as Eq.~ (\ref{partial1}).

In addition, 
\begin{eqnarray}
	&&\frac{\partial\beta^{k,*}}{\partial\mu}=\frac{-\sigma_k}{\left(1+\gamma_k-\frac{\gamma_k\theta_k}{n}\right)\sigma\nu_k^2},\qquad
	\frac{\partial\beta^{k,*}}{\partial\sigma}=\frac{\mu\sigma_k}{\left(1+\gamma_k-\frac{\gamma_k\theta_k}{n}\right)\sigma^2\nu_k^2},\nonumber\\
	&&\frac{\partial\beta^{k,*}}{\partial\mu_k}=\frac{1}{\left(1+\gamma_k-\frac{\gamma_k\theta_k}{n}\right)\nu_k^2},\qquad\;\;\;
	\frac{\partial\beta^{k,*}}{\partial\sigma_k}=\frac{-\mu}{\left(1+\gamma_k-\frac{\gamma_k\theta_k}{n}\right)\sigma\nu_k^2},\nonumber\\
	&&\frac{\partial\beta^{k,*}}{\partial\nu_k}=\frac{-2\left(\mu_k\sigma-\mu\sigma_k\right)}{\left(1+\gamma_k-\frac{\gamma_k\theta_k}{n}\right)\sigma\nu_k^3},\qquad
	\frac{\partial\beta^{k,*}}{\partial\gamma_k}=\frac{\left(1-\frac{\theta_k}{n}\right)\left(\mu_k\sigma-\mu\sigma_k\right)}{\left(1+\gamma_k-\frac{\gamma_k\theta_k}{n}\right)^2\sigma\nu_k^2},\nonumber
\end{eqnarray}
and
\begin{eqnarray}
	\frac{\partial\alpha^{k,*}}{\partial\mu}&=&\frac{\sigma_k^2}{\left(1+\gamma_k-\frac{\gamma_k\theta_k}{n}\right)\sigma^2\nu_k^2}+\frac{1}{\left(1+\gamma_k\right)\sigma^2}+\frac{\theta_k}{\left(1+\gamma_k\right)\sigma^2}\frac{\frac{1}{n}\sum_{j=1}^{n}\frac{1}{1+\gamma_j}}{1-\frac{1}{n}\sum_{j=1}^{n}\frac{\gamma_j\theta_j}{1+\gamma_j}},\nonumber\\
	\frac{\partial\alpha^{k,*}}{\partial\sigma}&=&\frac{-2\mu\sigma_k^2+\mu_k\sigma_k\sigma}{\left(1+\gamma_k-\frac{\gamma_k\theta_k}{n}\right)\sigma^3\nu_k^2}-\frac{2\mu}{\left(1+\gamma_k\right)\sigma^3}-\frac{2\mu\theta_k}{\left(1+\gamma_k\right)\sigma^3}\frac{\frac{1}{n}\sum_{j=1}^{n}\frac{1}{1+\gamma_j}}{1-\frac{1}{n}\sum_{j=1}^{n}\frac{\gamma_j\theta_j}{1+\gamma_j}},\nonumber\\
	\frac{\partial\alpha^{k,*}}{\partial\mu_k}&=&\frac{-\sigma_k}{\left(1+\gamma_k-\frac{\gamma_k\theta_k}{n}\right)\sigma\nu_k^2},\nonumber\\
	\frac{\partial\alpha^{k,*}}{\partial\sigma_k}&=&\frac{2\mu\sigma_k-\mu_k\sigma}{\left(1+\gamma_k-\frac{\gamma_k\theta_k}{n}\right)\sigma^2\nu_k^2},\nonumber\\
	\frac{\partial\alpha^{k,*}}{\partial\nu_k}&=&\frac{-2\sigma_k\left(\mu\sigma_k-\mu_k\sigma\right)}{\left(1+\gamma_k-\frac{\gamma_k\theta_k}{n}\right)\sigma^2\nu_k^3},\nonumber\\
	\frac{\partial\alpha^{k,*}}{\partial\theta_k}&=&\frac{\gamma_k\sigma_k\left(\mu\sigma_k-\mu_k\sigma\right)}{n\left(1+\gamma_k-\frac{\gamma_k\theta_k}{n}\right)^2\sigma^2\nu_k^2}+\frac{\mu}{\left(1+\gamma_k\right)\sigma^2}\frac{\frac{1}{n}\sum_{j=1}^{n}\frac{1}{1+\gamma_j}}{1-\frac{1}{n}\sum_{j=1}^{n}\frac{\gamma_j\theta_j}{1+\gamma_j}}\nonumber\\
	&&+\frac{\gamma_k\mu\theta_k}{n\left(1+\gamma_k\right)^2\sigma^2}\frac{\frac{1}{n}\sum_{j=1}^{n}\frac{1}{1+\gamma_j}}{\left(1-\frac{1}{n}\sum_{j=1}^{n}\frac{\gamma_j\theta_j}{1+\gamma_j}\right)^2}.
\end{eqnarray}
The signs of these partial derivatives are consistent with those in Eqs.~(\ref{partial2}) and (\ref{partial5}). Therefore, the sensitivity analysis of the parameters $\mu$, $\sigma$, $\mu_k$, $\sigma_k$, $\nu_k$, $\gamma_k$, and $\theta_k$ with respect to $\beta_k$, as well as the sensitivity analysis of the parameters $\mu$, $\sigma$, $\mu_k$, $\sigma_k$, $\nu_k$, and $\theta_k$ with respect to $\alpha_k$, are the same as those discussed in Section \ref{Exponential utility}.

\begin{figure}[htbp]
	\centering
	\includegraphics[scale=0.71]{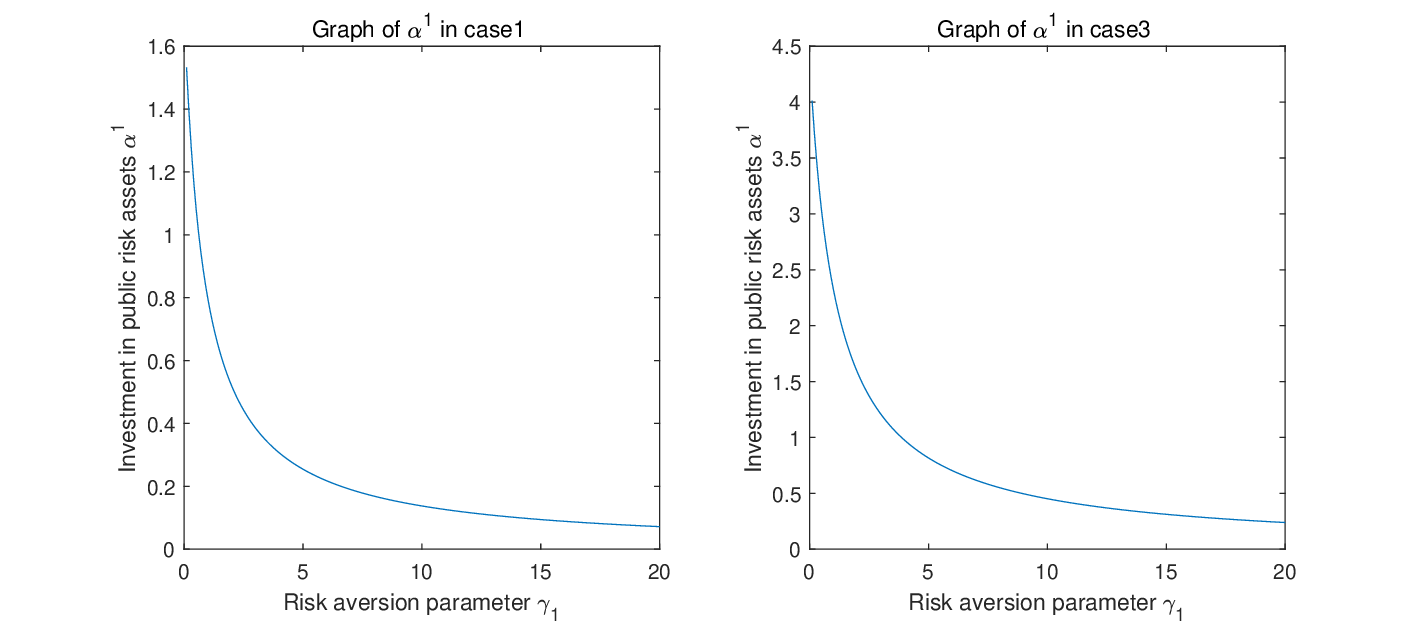}
	\caption{Investment in public risk assets $\alpha^{1,*}$ in case 1 and case 3}
	\label{Third4}
\end{figure}

\begin{figure}[htbp]
	\centering
	\includegraphics[scale=0.89]{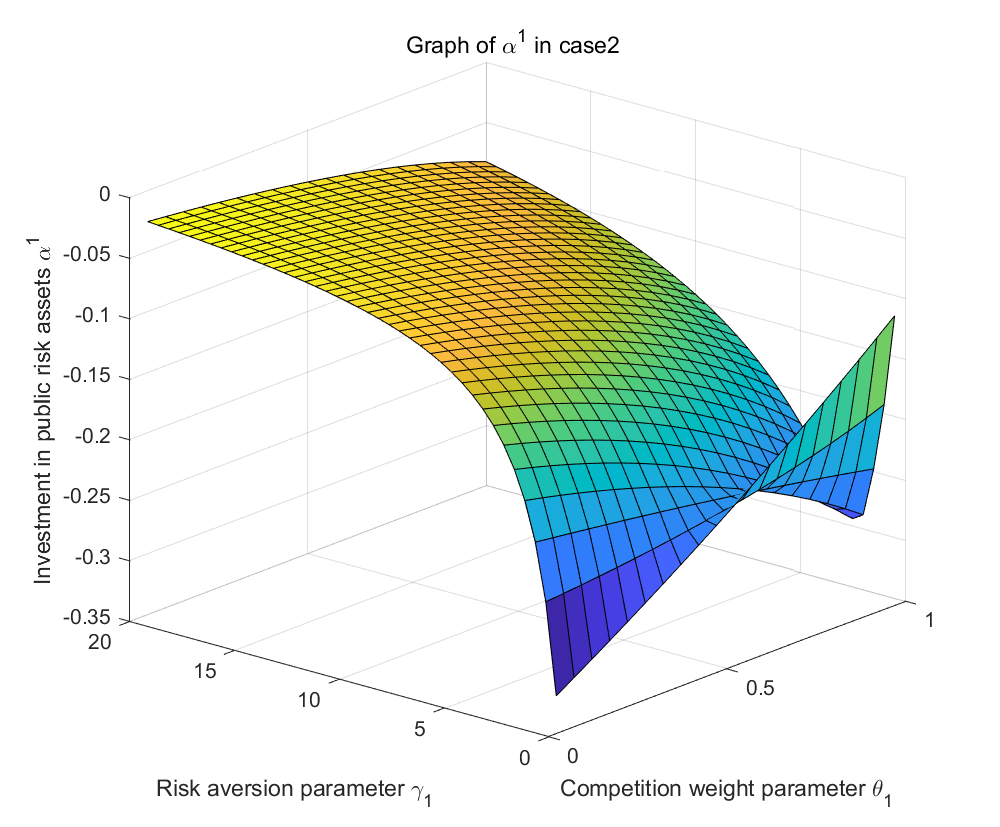}
	\caption{Investment in public risk assets $\alpha^{1,*}$ in case 2.}
	\label{Third5}
\end{figure} 

For the analysis of $\gamma_k$ in cases 1 and 3, and of both $\gamma_k$ and $\theta_k$ in case 2, we also consider two special cases: one with only two fund managers and another with an infinite number of fund managers. In the case of two fund managers, the analysis must be implemented numerically using graphical methods. We take $n=2,\left(\mu,\sigma\right)=\left(1,1\right),\theta_2=0.5,\gamma_2=5,\gamma_1\in\left(0,20\right)$, $\theta_1=0.5,\left(\mu_1,\sigma_1,\nu_1\right)=\left(2,3,3\right)$ in case 1, $\theta_1=0.5,\left(\mu_1,\sigma_1,\nu_1\right)=\left(2,-1,1\right)$ in case 3 and $\theta_1\in\left(0,1\right),\left(\mu_1,\sigma_1,\nu_1\right)=\left(5,2,2\right)$ in case 2. Fig.~\ref{Third4} shows that $\alpha^{1,*}$ increases as the risk aversion coefficient $\gamma_1$ decreases, regardless of whether it is case 1 or case 3. However, Fig.~\ref{Third5} illustrates the complex variations of $\alpha^{1,*}$ with respect to $\gamma_1$ and $\theta_1$ in case 2.

Unlike the analysis in Subsection \ref{analysis1}, the analysis for the case with an infinite number of managers differs from that with two managers, particularly in case 2. Note that, by taking the derivative and arranging it,
\begin{eqnarray}
	\frac{\partial\alpha^{k,*}_{\infty}}{\partial\theta_k}&=&\frac{\mu R}{\left(1+\gamma_k\right)\sigma^2}>0,\nonumber\\
	\frac{\partial\alpha^{k,*}_{\infty}}{\partial\gamma_k}&=& - \frac{1}{\left(1+\gamma_k\right)^2}\left[\frac{\sigma_k\left(\mu\sigma_k-\mu_k\sigma\right)}{\sigma^2\nu_k^2} + \frac{\mu\left(1+\theta_k R\right)}{\sigma^2}\right]\nonumber.
\end{eqnarray}
Regardless of whether it is case 1, case 2, or case 3, $\alpha^{k,*}_{\infty}$ increases as $\theta_k$ increases. In case 1 and case 3, $\alpha^{k,*}$ increases as $\gamma_k$ decreases. However, in case 2, $\alpha^{k,*}$ can either increase or decrease with $\gamma_k$, depending on $\theta_k$ and other parameters.

Furthermore, we consider the effect of other fund managers' parameters on the $k$-th fund manager's investment in $S^*$. For any $i \neq k$,
\begin{eqnarray}
	\frac{\partial\alpha^{k,*}}{\partial\theta_i}&=&\frac{\theta_k\mu}{\left(1+\gamma_k\right)\sigma^2}\frac{\frac{1}{n}\sum_{j=1}^{n}\frac{1}{1+\gamma_j}}{\left(1-\frac{1}{n}\sum_{j=1}^{n}\frac{\gamma_j\theta_j}{1+\gamma_j}\right)^2}\frac{\gamma_i}{n\left(1+\gamma_i\right)}>0\nonumber\\
	\frac{\partial\alpha^{k,*}}{\partial\delta_i}&=&-\frac{\theta_k\mu}{\left(1+\gamma_k\right)\sigma^2}\frac{1-\frac{1}{n}\sum_{j=1}^{n}\frac{\gamma_j\theta_j}{1+\gamma_j}-\frac{1}{n}\sum_{j=1}^{n}\frac{1}{1+\gamma_j}\gamma_i}{n\left(1+\gamma_i\right)^2\left(1-\frac{1}{n}\sum_{j=1}^{n}\frac{\gamma_j\theta_j}{1+\gamma_j}\right)^2}\nonumber\\
	&\le& -\frac{\theta_k\mu}{\left(1+\gamma_k\right)\sigma^2}\frac{1-\max_{j}\theta_j}{n\left(1+\gamma_i\right)^2\left(1-\frac{1}{n}\sum_{j=1}^{n}\frac{\gamma_j\theta_j}{1+\gamma_j}\right)^2} <0\nonumber.
\end{eqnarray} 
An increase in the competition weight parameter $\theta_i$ or a decrease in the risk aversion parameter $\gamma_i$ of any other fund manager $i$ will lead to an increase in the { allocation fraction} of the $k$-th fund manager's investment in the public risk asset $S^*$.
\subsection{{{\bf The MFG under mean-variance criterion}}}
In this subsection, we revise the financial model to construct the mean-field game (MFG) under the mean-variance criterion, assuming an infinite number of fund managers where the parameters of each fund manager are random variables. Similar to Subsection \ref{Themfg}, we assume that the probability space $\left(\Omega,\mathcal{F},\mathbb{P}\right)$ supports also another independent one-dimensional Brownian motion, $W$, as well as a type vector
\begin{eqnarray}
	\Phi = \left(\gamma,\theta,\widehat{\mu},\widehat{\sigma},\widehat{\nu}\right)\nonumber
\end{eqnarray}
with the values in the space $\left(0,+\infty\right)\times\left[0,1\right)\times\left(0,+\infty\right)\times\mathbb{R}\times\left(0,+\infty\right)$, which is independent of $W$ and $B$. { 

Let $\mathcal{F}^{MF} = \left\{\mathcal{F}^{MF}_t\right\}_{t \in [0, \mathcal{T}]}$ denote the smallest filtration satisfying the usual assumptions under which $\Phi$ is $\mathcal{F}^{MF}_0$-measurable, and both $W$ and $B$ are adapted. Additionally, let $\mathcal{F}^{B} = \left\{\mathcal{F}^{B}_t\right\}_{t \in [0, \mathcal{T}]}$ denote the natural filtration generated by $B$.

The logarithmic return $\left\{R_t^\pi\right\}_{t \in [0, \mathcal{T}]}$ of any fund manager with type vector $\Phi$ evolves as given by Eq.~(\ref{Rt}), and a strategy $\pi = \left(\alpha, \beta\right)$ is admissible if it is $\mathcal{F}^{MF}$ progressively measurable and satisfies Eq.~(\ref{admissible}). For an admissible strategy $\eta$, define an $\mathcal{F}^{B}_\mathcal{T}$-measurable random process $\{\overline{R}_t = \mathbb{E}\left[R^{\eta}_t|\mathcal{F}^{B}_\mathcal{T}\right]\}_{t \in [0, \mathcal{T}]}$, where $\{R^{\eta}_t\}_{t \in [0, \mathcal{T}]}$ is the logarithmic return process in Eq.~(\ref{Rt}) corresponding to the strategy $\eta$. 

For a fixed $\eta$, the payoff of the manager with type vector $\Phi$ and admissible strategy $\pi$ is given by:
\[
J\left(\pi|(t,z)\right) := \mathbb{E}_t\left(Z_\mathcal{T}\right) - \frac{\gamma}{2} \mathbb{V}_t\left(Z_\mathcal{T}\right),
\]
where the process $\{Z_t\}_{t \in [0, \mathcal{T}]}$ is defined as $Z_t := R^{\pi}_t - \theta \overline{R}_t$, and $\mathbb{E}_t$ and $\mathbb{V}_t$ represent the conditional expectation and variance given the condition $Z_t = z$.
}

 Now we obtain a time-consistent MFE for this MFG.
 \begin{definition}\label{defination2}
 	{ An admissible strategy $\pi^*$ is a time-consistent MFE for this MFG if $\overline{R}_t = \mathbb{E}\left[R^{\pi^*}_t|\mathcal{F}^{B}_\mathcal{T}\right], t\in\left[0,\mathcal{T}\right]$, and for any fixed initial state $\left(t,z\right)\in\left[0,\mathcal{T}\right]\times\mathbb{R}$, and any admissible strategy $u$, }
 	\begin{eqnarray}
 		\liminf_{h\rightarrow 0^+} \frac{J\left(\pi^{*}|\left(t,z\right)\right)-J\left(\pi^{h,u}|\left(t,z\right)\right)}{h}\ge 0,\nonumber
 	\end{eqnarray}
 	where $\pi^{h,u}$ is defined by { 
 	\begin{eqnarray}
 		\pi^{h,u}_s:=\left\{
 		\begin{aligned}
 			&u_s,\quad\qquad t\le s < t+h,\\
 			&\pi^{*}_s, \quad\quad t+h\le s\le \mathcal{T}
 		\end{aligned}
 		\right.\nonumber
 	\end{eqnarray} }
 	with any $h\in\mathbb{R}^{+}$.
 		
 	A constant time-consistent MFE is a time-consistent equilibrium in which $\alpha^*$ and $\beta^*$ are both constant random variables over time.
 \end{definition}

\begin{theorem}\label{the4}
	There exists a unique constant time-consistent MFE, given by
	\begin{eqnarray}
		\alpha^* &=&\frac{\widehat{\sigma}\left(\mu\widehat{\sigma}-\widehat{\mu}\sigma\right)}{\left(1+\gamma\right)\sigma^2\widehat{\nu}^2}+\frac{\mu+\theta\mu R}{\left(1+\gamma\right)\sigma^2},\label{alpha2}\\
		\beta^* &=& \frac{\widehat{\mu}\sigma-\mu\widehat{\sigma}}{\left(1+\gamma\right)\sigma\widehat{\nu}^2},\label{beta2}
	\end{eqnarray}
	where $R$ is shown in Eq.~(\ref{R}).
\end{theorem}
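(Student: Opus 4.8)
The plan is to treat Theorem~\ref{the4} as a coupled problem: an \emph{inner} time-inconsistent stochastic control problem for a single representative manager, solved by the extended Hamilton--Jacobi--Bellman (HJB) system of \cite{bjork2017time} since the mean--variance criterion is time-inconsistent, together with an \emph{outer} mean-field consistency condition that pins down the benchmark process $\overline{R}$. The argument runs in close parallel to the proof of Theorem~\ref{the2}, with the exponential-utility HJB equation there replaced by the extended HJB for the mean--variance objective. Because only \emph{constant} equilibria are sought, the PDE system collapses to an algebraic one and the fixed point becomes scalar, which keeps the whole computation explicit.

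First I would fix a candidate constant MFE $\pi^* = (\alpha^*, \beta^*)$, an $\mathcal{F}^{MF}_0$-measurable (hence type-dependent) random vector, and compute the induced benchmark $\overline{R}_t = \mathbb{E}[R^{\pi^*}_t \mid \mathcal{F}^{B}_\mathcal{T}]$. Integrating Eq.~(\ref{Rt}) with constant coefficients and using that the type $\Phi$ and the idiosyncratic noise $W$ are independent of $B$, the $W_t$-term averages to zero while the $B_t$-term retains only its expected loading, so that
\[
\overline{R}_t = \overline{m}\,t + \overline{a}\,B_t, \qquad \overline{a} := \mathbb{E}\bigl[\sigma\alpha^* + \widehat{\sigma}\beta^*\bigr],
\]
for a scalar drift $\overline{m}$ whose precise value will not affect the optimizer. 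Thus the entire mean-field interaction is carried by the single constant $\overline{a}$, the population-averaged loading on the common noise $B$.

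Next, treating $\overline{a}$ as fixed, I would solve the representative manager's problem. Writing the dynamics of $Z_t = R^{\pi}_t - \theta\overline{R}_t$ produces drift $\mu_Z$ together with loadings $\sigma\alpha + \widehat{\sigma}\beta - \theta\overline{a}$ on $B$ and $\widehat{\nu}\beta$ on $W$. Applying the extended HJB with terminal data $V(\mathcal{T},z) = z$ and the linear ansatz $V(t,z) = z + A(t)$, $g(t,z) = z + C(t)$ (forced by the structure of the criterion), the system reduces to the pointwise maximization of
\[
\mu_Z - \tfrac{\gamma}{2}\bigl[(\sigma\alpha + \widehat{\sigma}\beta - \theta\overline{a})^2 + \widehat{\nu}^2\beta^2\bigr]
\]
over $(\alpha,\beta)$, a strictly concave quadratic with a unique critical point. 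The first-order condition in $\alpha$ gives $\sigma\alpha + \widehat{\sigma}\beta$ as an affine function of $\overline{a}$; substituting this into the first-order condition in $\beta$ makes the $\overline{a}$-dependence cancel and yields exactly Eq.~(\ref{beta2}), while the $\alpha$-condition then determines $\alpha^*$ in terms of $\overline{a}$ and $\beta^*$. Finally I would close the loop: inserting the optimizer into $\overline{a} = \mathbb{E}[\sigma\alpha^* + \widehat{\sigma}\beta^*]$ gives a single scalar \emph{linear} equation in $\overline{a}$, whose unique solution is $\overline{a} = \tfrac{\mu}{\sigma}R$ with $R$ exactly as in Eq.~(\ref{R}). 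Back-substitution then produces the closed forms of Eqs.~(\ref{alpha2})--(\ref{beta2}): a hedging component plus a competition component built from $R$. Uniqueness among constant MFEs follows from the uniqueness of the concave-quadratic maximizer combined with the uniqueness of the solution of the scalar fixed-point equation.

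The main obstacle I anticipate lies at the interface of the two steps rather than in any single calculation. On the control side one must verify rigorously that the extended-HJB/linear-ansatz candidate really is a time-consistent equilibrium in the sense of the $\liminf$ condition of Definition~\ref{defination2} (the verification argument of \cite{bjork2017time}), and confirm admissibility in the sense of Eq.~(\ref{admissible}). On the mean-field side one must ensure the fixed-point equation is well posed, i.e.\ that the denominator $1 - \mathbb{E}\frac{\gamma\theta}{1+\gamma}$ is strictly positive; this follows from $\theta \in [0,1)$ and $\frac{\gamma}{1+\gamma} < 1$, so that $\frac{\gamma\theta}{1+\gamma} < \theta$ and the expectation stays below $1$ under a mild integrability assumption on the type law. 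Once these two points are secured, the remaining work is routine linear--quadratic algebra.
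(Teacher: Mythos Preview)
Your proposal is correct and lands on exactly the same pointwise quadratic maximization and the same scalar fixed-point equation for $\overline{a}=\mathbb{E}[\sigma\alpha^*+\widehat{\sigma}\beta^*]$ as the paper. The one methodological difference is how you reach the pointwise problem. You invoke the extended HJB system of \cite{bjork2017time} together with a linear ansatz for $V$ and $g$; the paper instead computes $J(\pi\mid(t,z))$ directly. Because the drift and diffusion of $Z$ do not depend on $z$, one has $\mathbb{E}_t[Z_\mathcal{T}]=z+\int_t^{\mathcal{T}}h(\pi_s)\,\mathrm{d}s$ and $\mathbb{V}_t[Z_\mathcal{T}]=\int_t^{\mathcal{T}}g(\pi_s)\,\mathrm{d}s$, so $J$ is already an additive integral functional, and the $\liminf$ condition of Definition~\ref{defination2} reduces immediately to $\pi^*=\arg\max_u\{h(u)-\tfrac{\gamma}{2}g(u)\}$ without any PDE machinery or verification theorem. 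Your route works and is more general in principle, but for this particular problem the paper's direct computation is shorter and sidesteps the extended-HJB verification step you flag as an obstacle. Once at the pointwise problem, the paper solves the first-order condition in matrix form, $\pi^*=\frac{1}{1+\gamma}(\tilde{\sigma}\tilde{\sigma}^T+\tilde{\nu}\tilde{\nu}^T)^{-1}(\tilde{\mu}+\gamma\theta\,\mathbb{E}[\tilde{\sigma}^T\pi^*]\tilde{\sigma})$, and closes the fixed point exactly as you describe.
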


\begin{proof}
	See Appendix \ref{proof4}.
\end{proof}
\begin{remark}
	The constant time-consistent MFE $\left(\alpha^*, \beta^*\right)$ has the same expression as given in Eqs.~(\ref{lim2}) and (\ref{lim1}), representing the limit of the time-consistent equilibrium strategy in Definition \ref{def2} as the number of fund managers approaches infinity.
\end{remark}
\begin{remark}
Similar to the MFE analysis in the previous section, the { allocation fraction} of investment in the private risky asset $\beta^*
$ remains unaffected by competition and represents the optimal solution to the optimization problem. The  { allocation fraction} in the public risk asset $\alpha^*$ can be divided into two components: one that ignores competition and another that addresses competitive risk.
\end{remark}

\vskip 10pt
\section{\bf Conclusion}
This paper uses excess logarithmic returns as the evaluation criterion for funds and constructs both the $n$-player game and the MFG of portfolio management under the relative performance for fund managers using two criteria. In the first criterion, the fund manager's risk aversion level is assumed to be independent of excess logarithmic returns, with the exponential utility function selected in relation to the excess logarithmic return. In the second criterion, the fund manager is assumed to follow a mean-variance criterion, for which we define a time-consistent equilibrium. Each fund manager not only has a unique risky asset but can also invest in a public risk-free asset and a public risky asset. We rigorously define the equilibrium for each game and provide the explicit solution for the portfolio.

Whether considering fund managers with exponential utility or those with a mean-variance criterion, and whether analyzing the $n$-player game or the MFG, the introduction of excess logarithmic returns results in a constant investment ratio. The MFE of the MFG aligns with the equilibrium limit in the $n$-player game as $n$ approaches infinity. The inclusion of public risk assets significantly influences the outcomes, with the impact of competition mainly reflected in the investment in public risk assets. In the $n$-player game, the  { allocation fraction} in private risky assets is only affected by the competition weight parameter and the number of competitors $n$, and is not influenced by the parameters of other fund managers. The { allocation fraction} in the public risky asset is affected by the competition weight parameter and the risk aversion coefficient of other fund managers but not by the parameters of other fund managers' risk assets. In the MFG, the { allocation fraction} in private risk assets mirrors the optimal investment problem without competition and is not influenced by competition-related parameters. The  { allocation fraction} in public risky assets can be divided into two components: one that ignores competition and one that hedges competitive risk. The sensitivity analysis of the equilibrium is demonstrated through derivative analysis or numerical graphics.

\vskip 15pt
\paragraph{\bf Acknowledgements.} The authors thank the members of the group of Actuarial Sciences and Mathematical Finance at the Department of Mathematical Sciences, Tsinghua University for their feedback and useful conversations. 
\vskip 10pt
\paragraph{\bf\large Compliance with Ethical Standards}
\vskip 15pt
\paragraph{\bf Disclosure of potential conflicts of interest} All authors declare that they have no conflict of interest.
\vskip 15pt
\paragraph{\bf Funding} This study was funded by the National Natural Science Foundation of China (Grant Nos. 12371477, 12271290), the MOE Project of Key Research Institute of Humanities and Social Sciences (22JJD910003).
\vskip 15pt
\paragraph{\bf Ethical approval} This article does not contain any studies with human participants or animals performed by any of the authors

\appendix
\renewcommand{\theequation}{\thesection.\arabic{equation}}
\vskip 15pt

\section{Proof of Theorem \ref{the1}.}\label{proof1}

\begin{proof}
	First, the strategies of the other fund managers are fixed and only the $k$-th fund manager's optimization problem is considered. The $k$-th fund manager's strategy $\pi^{k,*}$ in equilibrium, is optimal for this optimization problem. Notice that { 
	\begin{eqnarray}
		R^k_\mathcal{T}-\frac{\theta_k}{n}\sum_{i=1}^{n}R^i_\mathcal{T} = \left(1-\frac{\theta_k}{n}\right)R^k_\mathcal{T}-\theta_k\frac{1}{n}\sum_{i\neq k}R^i_\mathcal{T},\label{eq1}
	\end{eqnarray} }
	and define
	\begin{eqnarray}
		Y_t:=\frac{1}{n}\sum_{i\neq k}R^i_t.\label{eq2}
	\end{eqnarray}
Then, the optimization problem of the $k$-th fund manager becomes {
	\begin{eqnarray}
		v\left(t,r,y\right):=\sup_{\pi^k}\mathbb{E}\left\{-\exp\left[-\frac{1}{\delta_k}\left(\left(1-\frac{\theta_k}{n}\right)R^k_\mathcal{T}-\theta_kY_\mathcal{T}\right)\right]\bigg|\left(R^k_t,Y_t\right)=\left(r,y\right)\right\}.\nonumber
	\end{eqnarray}
The processes $\{R^k_t\}_{t\in\left[0,\mathcal{T}\right]}$ and $\{Y_t\}_{t\in\left[0,\mathcal{T}\right]}$ evolve as }
\begin{eqnarray}
\!\! \!\!	\!\!	\!\!			\rd R^k_t &=& \left(\kappa + \pi^{kT}_t \tilde{\mu}_k   - \frac{1}{2}\pi^{kT}_t\tilde{\sigma}_k\tilde{\sigma}_k^T\pi^{k}_t - \frac{1}{2}\pi^{kT}_t\tilde{\nu}_k\tilde{\nu}_k^T\pi^{k}_t\right) \rd t +  \pi^{kT}_t\tilde{\sigma}_k \rd B_t + \pi^{kT}_t\tilde{\nu}_k \rd W^k_t,\label{r1}\\
		\rd Y_t &=& \left[\frac{n-1}{n}\kappa + \frac{1}{n}\sum_{i\neq k}\pi^{iT}_t \tilde{\mu}_i   - \frac{1}{2n}\sum_{i\neq k}\pi^{iT}_t\left(\tilde{\sigma}_i\tilde{\sigma}_i^T+\tilde{\nu}_i\tilde{\nu}_i^T\right)\pi^{i}_t\right] \rd t \nonumber \\
		&&+  \left(\frac{1}{n}\sum_{i\neq k}\pi^{iT}_t\tilde{\sigma}_i\right) \rd B_t + \frac{1}{n}\sum_{i\neq k}\pi^{iT}_t\tilde{\nu}_i \rd W^i_t, \label{y1}
	\end{eqnarray}
	where 
	$$ \tilde{\mu}_k := \begin{pmatrix}
		\mu \\ \mu_k
	\end{pmatrix},\quad \tilde{\sigma}_k := \begin{pmatrix}
		\sigma \\ \sigma_k
	\end{pmatrix}, \quad \tilde{\nu}_k := \begin{pmatrix}
		0 \\ \nu_k
	\end{pmatrix} .$$
	
	Assume that the strategies of other fund managers are constant, i.e., $\pi^i_t\equiv \pi^i, \forall i \neq k$.  Define
	\begin{eqnarray}
		A &=& \frac{n-1}{n}\kappa + \frac{1}{n}\sum_{i\neq k}\pi^{iT} \tilde{\mu}_i   - \frac{1}{2n}\sum_{i\neq k}\pi^{iT}\left(\tilde{\sigma}_i\tilde{\sigma}_i^T+\tilde{\nu}_i\tilde{\nu}_i^T\right)\pi^{i},\nonumber\\
		B &=& \frac{1}{n}\sum_{i\neq k}\pi^{iT}\tilde{\sigma}_i,\label{B1}\\
		C &=& \frac{1}{n^2}\sum_{i\neq k}\pi^{iT}\tilde{\nu}_i\tilde{\nu}_{iT}\pi^i.\nonumber
	\end{eqnarray}
Then	the HJB equation of this optimization problem is {
	\begin{eqnarray}
		&&v_t+\max_{\pi\in\mathbb{R}^2}\left\{\left[\kappa+\pi^T\tilde{\mu}_k-\frac{1}{2}\pi^T\left(\tilde{\sigma}_k\tilde{\sigma}_k^T+\tilde{\nu}_k\tilde{\nu}_k^T\right)\pi\right]v_r+\frac{1}{2}\pi^T\left(\tilde{\sigma}_k\tilde{\sigma}_k^T+\tilde{\nu}_k\tilde{\nu}_k^T\right)\pi v_{rr}\right.\nonumber\\
		&&\qquad\qquad\qquad\qquad\qquad\qquad\qquad+\left.Av_y+B\pi^T\tilde{\sigma}_kv_{ry}+\frac{1}{2}\left(B^2+C\right)v_{yy}\right\}=0,\nonumber\\
		&&v(\mathcal{T},r,y) = -\exp\left[-\frac{1}{\delta_k}\left(\left(1-\frac{\theta_k}{n}\right)r-\theta_ky\right)\right].\nonumber
	\end{eqnarray} }
	For this classical optimization problem, the maximum point
	\begin{eqnarray}\label{pi1}
		\pi^{k,*}=-\frac{1}{v_{rr}-v_r}\left(\tilde{\sigma}_k\tilde{\sigma}_k^T+\tilde{\nu}_k\tilde{\nu}_k^T\right)^{-1}\left(\tilde{\mu}_kv_r+B\tilde{\sigma}_kv_{ry}\right)
	\end{eqnarray}
	is the optimal strategy of the $k$-th fund manager, then the HJB equation becomes { 
	\begin{eqnarray} \label{HJB1}
		&&v_t \! + \! Av_y \! + \! \frac{1}{2}\left(B^2 \! + \! C\right)v_{yy} \! - \! \frac{1}{2\left(v_{rr} \! - \! v_r\right)}\left(\tilde{\mu}_kv_r+B\tilde{\sigma}_kv_{ry}\right)^T\!\left(\tilde{\sigma}_k\tilde{\sigma}_k^T+\tilde{\nu}_k\tilde{\nu}_k^T\right)^{-1}\!\left(\tilde{\mu}_kv_r+B\tilde{\sigma}_kv_{ry}\right)=0,\nonumber\\
		&&v(\mathcal{T},r,y) = -\exp\left[-\frac{1}{\delta_k}\left(\left(1-\frac{\theta_k}{n}\right)r-\theta_ky\right)\right].
	\end{eqnarray} }
	The HJB equation (\ref{HJB1}) has a unique solution of the form $v=f\left(t\right)g\left(r,y\right)$, which means that $\pi^{k,*}$ is a constant, and the solution is { 
	\begin{eqnarray}
		v=-\exp\left[-\frac{1}{\delta_k}\left(\left(1-\frac{\theta_k}{n}\right)r-\theta_ky\right)-\rho\left(\mathcal{T}-t\right)\right]\nonumber
	\end{eqnarray} }
	with
	\begin{eqnarray}
		\rho=-A\frac{\theta_k}{\delta_k}-\frac{1}{2}\left(B^2+C\right)\frac{\theta_k^2}{\delta_k^2}+\frac{1-\frac{\theta_k}{n}}{2\left(1-\frac{\theta_k}{n}+\delta_k\right)}\left(\tilde{\mu}_k+B\frac{\theta_k}{\delta_k}\tilde{\sigma}_k\right)^T\!\left(\tilde{\sigma}_k\tilde{\sigma}_k^T+\tilde{\nu}_k\tilde{\nu}_k^T\right)^{-1}\!\left(\tilde{\mu}_k+B\frac{\theta_k}{\delta_k}\tilde{\sigma}_k\right).\nonumber
	\end{eqnarray}

	Define two matrices
	\begin{eqnarray}
  M &:=& \tilde{\sigma}_k\tilde{\sigma}_k^T+\tilde{\nu}_k\tilde{\nu}_k^T=\begin{pmatrix}
			\sigma^2 & \sigma\sigma_k\\\sigma\sigma_k & \sigma_k^2+\nu_k^2
		\end{pmatrix},\label{M1}\\
		N &:=&\begin{pmatrix}
			\sigma_k^2+\left(1-\frac{\theta_k}{n\left(1+\delta_k\right)}\right)\nu_k^2 & -\sigma\sigma_k\\-\sigma\sigma_k & \sigma^2
		\end{pmatrix},\nonumber
	\end{eqnarray}
	and
	\begin{eqnarray}
		E:=\frac{1}{n}\sum_{i\neq k}\pi^{iT}\tilde{\sigma}_i+\frac{1}{n}\pi^{*T}\tilde{\sigma}_k.\label{E1}
	\end{eqnarray}
	 Then the maximum point (\ref{pi1}) can be rewritten as
	\begin{eqnarray}
		\pi^{k,*}&=&\frac{1}{1-\frac{\theta_k}{n}+\delta_k}M^{-1}\left(\delta_k\tilde{\mu}_k+B\theta_k\tilde{\sigma}_k\right)\nonumber\\
		&=&\frac{1}{1-\frac{\theta_k}{n}+\delta_k}M^{-1}\left(\delta_k\tilde{\mu}_k+E\theta_k\tilde{\sigma}_k\right)-\frac{\theta_k}{n\left(1-\frac{\theta_k}{n}+\delta_k\right)}M^{-1}\tilde{\sigma}_k\tilde{\sigma}_k^T\pi^{k,*}.\nonumber
	\end{eqnarray}
	Thus
	\begin{eqnarray}
		\pi^{k,*}&=&\left[\left(1-\frac{\theta_k}{n}+\delta_k\right)I+\frac{\theta_k}{n}M^{-1}\tilde{\sigma}_k\tilde{\sigma}_k^T\right]^{-1}M^{-1}\left(\delta_k\tilde{\mu}_k+E\theta_k\tilde{\sigma}_k\right)\nonumber\\
		&=& \left[M\begin{pmatrix}
			1+\delta_k & \frac{\theta_k\sigma_k}{n\sigma}\\0 & 1+\delta_k-\frac{\theta_k}{n}
		\end{pmatrix}\right]^{-1}\left(\delta_k\tilde{\mu}_k+E\theta_k\tilde{\sigma}_k\right)\nonumber\\
		&=& \frac{1}{\left(1+\delta_k-\frac{\theta_k}{n}\right)\sigma^2\nu_k^2}N\left(\delta_k\tilde{\mu}_k+E\theta_k\tilde{\sigma}_k\right).\label{0pi2}
	\end{eqnarray}

{
	A standard verification argument as in Theorem 3.5.2 of \cite{pham2009continuous} or Theorem III.8.1 of \cite{fleming2006controlled} establishes the optimality of the strategies  $\pi^{k,*}$ derived above. Uniqueness follows from the same verification arguments using the strict concavity of the objective functions. 
}
	
	Next, consider that each fund manager achieves the optimum, that is, $\left(\pi_1^*,\cdots,\pi_n^*\right)$ is an equilibrium. { From the previous discussion, we can conclude that $\pi^{k,*}$ takes the form given in Eq.~(\ref{0pi2}) for any $k$.} Then $E=\frac{1}{n}\sum_{i=1}^{n}\tilde{\sigma}_i^T\pi^{i,*}$, and we have
	\begin{eqnarray}
		\tilde{\sigma}_k^T\pi^{k,*}=\frac{\delta_k}{1+\delta_k}\frac{\mu}{\sigma}+\frac{\theta_k}{1+\delta_k}E,\quad k=1,\cdots,n.\nonumber
	\end{eqnarray}
	Further,
	\begin{eqnarray}
		E=\frac{1}{n}\sum_{i=1}^{n}\tilde{\sigma}_i^T\pi^{i,*}=\frac{1}{n}\sum_{i=1}^{n}\frac{\delta_k}{1+\delta_k}\frac{\mu}{\sigma}+\frac{1}{n}\sum_{i=1}^{n}\frac{\theta_k}{1+\delta_k}E.\nonumber
	\end{eqnarray}
	Note that $\frac{1}{n}\sum_{i=1}^{n}\frac{\theta_k}{1+\delta_k} <1 $ and then $E$ has the same representation as Eq.~(\ref{D1}), i.e., $E=D$.
	
	Finally, the equilibrium is given as Eqs.~(\ref{alp1}) and (\ref{bet1}), and the uniqueness follows from the above proof.
	
\end{proof}
\section{Proof of Theorem \ref{the2}.}\label{proof2}

\begin{proof}
	First, for some $\mathcal{F}^{MF}_0$ measurable two dimensional random variable $\eta$ with $\mathbb{E}\Vert\eta\Vert^2 < +\infty$, { define $\overline{R}_t := \mathbb{E}\left[R^{\eta}_t|\mathcal{F}^{B}_\mathcal{T}\right],t\in\left[0,\mathcal{T}\right]$. } Because $\phi$, $B$, and $W$ are independent, we have
	\begin{eqnarray}
		\rd \overline{R}_t &=& \mathbb{E}\left[\kappa + \eta^{T} \tilde{\mu}   - \frac{1}{2}\eta^{T}\tilde{\sigma}\tilde{\sigma}^T\eta - \frac{1}{2}\eta^{T}\tilde{\nu}\tilde{\nu}^T\eta\right] \rd t + \mathbb{E}\left[\eta^{T}\tilde{\sigma}\right] \rd B_t,\nonumber\\
		\overline{R}_0 &=& 0,\nonumber
	\end{eqnarray}
	where 
	$$ \tilde{\mu} := \begin{pmatrix}
		\mu \\ \widehat{\mu}
	\end{pmatrix},\quad \tilde{\sigma} := \begin{pmatrix}
		\sigma \\ \widehat{\sigma}
	\end{pmatrix}, \quad \tilde{\nu} := \begin{pmatrix}
		0 \\ \widehat{\nu}
	\end{pmatrix}. $$
	
	{ For some admissible strategy $\pi$, the excess logarithmic return process $\{Z_t\}_{t\in\left[0,\mathcal{T}\right]}$ evolves as }
	\begin{eqnarray}
		\rd Z_t &=& \left(\kappa + \pi^{T}_t \tilde{\mu}  - \frac{1}{2}\pi^{T}_t\tilde{\sigma}\tilde{\sigma}^T\pi_t - \frac{1}{2}\pi^{T}_t\tilde{\nu}\tilde{\nu}^T\pi_t-\theta\mathbb{E}\left[\kappa + \eta^{T} \tilde{\mu}   - \frac{1}{2}\eta^{T}\tilde{\sigma}\tilde{\sigma}^T\eta - \frac{1}{2}\eta^{T}\tilde{\nu}\tilde{\nu}^T\eta\right] \right) \rd t \nonumber\\
		&&+  \left(\pi^{T}_t\tilde{\sigma}-\theta\mathbb{E}\left[\eta^{T}\tilde{\sigma}\right]\right) \rd B_t + \pi^{T}_t\tilde{\nu} \rd W_t,\nonumber\\
		Z_0 &=& 0,\label{Zt}
	\end{eqnarray}

	{ Assume that $\pi^*$  is optimal for the optimization problem of maximizing $\mathbb{E}\left\{-\exp\left[-\frac{1}{\delta}\left(Z_\mathcal{T}\right)\right]\right\}$, the value function 
	\begin{eqnarray}
		v\left(t,z\right):=\sup_{\pi}\mathbb{E}\left\{-\exp\left[-\frac{1}{\delta}\left(Z_\mathcal{T}\right)\right]\bigg|Z_t=z\right\}\nonumber
	\end{eqnarray} }
	is the solution of the HJB equation { 
	\begin{eqnarray}
		&&v_t+\max_{\pi\in\mathbb{R}^2}\left\{\left[\kappa+\pi^T\tilde{\mu}-\frac{1}{2}\pi^T\left(\tilde{\sigma}\tilde{\sigma}^T+\tilde{\nu}\tilde{\nu}^T\right)\pi-\theta\mathbb{E}\left[\kappa + \eta^{T} \tilde{\mu}   - \frac{1}{2}\eta^{T}\tilde{\sigma}\tilde{\sigma}^T\eta - \frac{1}{2}\eta^{T}\tilde{\nu}\tilde{\nu}^T\eta\right]\right]v_z\right.\nonumber\\
		&&\qquad\qquad\qquad+\left.\frac{1}{2}\left[\pi^T\tilde{\nu}\tilde{\nu}^T\pi+\left(\tilde{\sigma}^T\pi-\theta\mathbb{E}\left[\tilde{\sigma}^T\eta\right]\right)^2\right] v_{zz}\right\}=0,\nonumber\\
		&&v(\mathcal{T},z) = -\exp\left(-\frac{1}{\delta}z\right).\nonumber
	\end{eqnarray} }
The maximum point
	\begin{eqnarray}\label{pi2}
		\pi^*=-\frac{1}{v_{zz}-v_z}\left(\tilde{\sigma}\tilde{\sigma}^T+\tilde{\nu}\tilde{\nu}^T\right)^{-1}\left(\tilde{\mu}v_z-\theta\mathbb{E}\left[\tilde{\sigma}^T\eta\right]\tilde{\sigma}v_{zz}\right)\nonumber
	\end{eqnarray}
	is the optimal strategy, and the HJB equation becomes { 
	\begin{eqnarray} 
		&&v_t  +  \left(\kappa-\theta\mathbb{E}\left[\kappa + \eta^{T} \tilde{\mu}   - \frac{1}{2}\eta^{T}\tilde{\sigma}\tilde{\sigma}^T\eta - \frac{1}{2}\eta^{T}\tilde{\nu}\tilde{\nu}^T\eta\right]\right)v_z  +   \frac{1}{2}\theta^2\mathbb{E}^2\left[\tilde{\sigma}^T\eta\right]v_{zz}  \nonumber\\
		&& \quad-  \frac{1}{2\left(v_{zz}  -  v_z\right)}\left(\tilde{\mu}v_z-\theta\mathbb{E}\left[\tilde{\sigma}^T\eta\right]\tilde{\sigma}v_{zz}\right)^T\left(\tilde{\sigma}\tilde{\sigma}^T+\tilde{\nu}\tilde{\nu}^T\right)^{-1}\left(\tilde{\mu}v_z-\theta\mathbb{E}\left[\tilde{\sigma}^T\eta\right]\tilde{\sigma}v_{zz}\right)=0,\nonumber\\
		&&v(\mathcal{T},z) = -\exp\left(-\frac{1}{\delta}z\right).\label{HJB2}
	\end{eqnarray} }
	The HJB equation (\ref{HJB2}) has a unique solution of the form $v=f\left(t\right)g\left(r,y\right)$, which means that $\pi^*$ is a constant, and the solution is { 
	\begin{eqnarray}
		v=-\exp\left[-\frac{z}{\delta}-\rho\left(\mathcal{T}-t\right)\right]\nonumber
	\end{eqnarray} }
	with
	\begin{eqnarray}
		\rho&=&-\frac{\kappa+\theta\mathbb{E}\left[\kappa + \eta^{T} \tilde{\mu}   - \frac{1}{2}\eta^{T}\tilde{\sigma}\tilde{\sigma}^T\eta - \frac{1}{2}\eta^{T}\tilde{\nu}\tilde{\nu}^T\eta\right]}{\delta}-\frac{\theta^2\mathbb{E}^2\left[\tilde{\sigma}^T\eta\right]}{2\delta^2}\nonumber\\
		&&+\frac{1}{2\left(1+\delta\right)}\left(\tilde{\mu}+\frac{\theta}{\delta}\mathbb{E}\left[\tilde{\sigma}^T\eta\right]\tilde{\sigma}\right)^T\left(\tilde{\sigma}\tilde{\sigma}^T+\tilde{\nu}\tilde{\nu}^T\right)^{-1}\left(\tilde{\mu}+\frac{\theta}{\delta}\mathbb{E}\left[\tilde{\sigma}^T\eta\right]\tilde{\sigma}\right).\nonumber
	\end{eqnarray}
	Therefore, the maximum point (\ref{pi2}) can be organized as
	\begin{eqnarray}
		\pi^*&=&\frac{1}{1+\delta}\left(\tilde{\sigma}\tilde{\sigma}^T+\tilde{\nu}\tilde{\nu}^T\right)^{-1}\left(\delta\tilde{\mu}+\theta\mathbb{E}\left[\tilde{\sigma}^T\eta\right]\tilde{\sigma}\right).\nonumber
	\end{eqnarray}

{
	Similar to Appendix \ref{proof1} [Proof of Theorem \ref{the1}], if follows from a standard verification theorem that there can be at most one classical solution of this HJB equation (\ref{HJB2}); see, e.g., the proof of Theorem 3.5.2 in \cite{pham2009continuous} (or Theorem III.8.1 in \cite{fleming2006controlled}) for the standard verification argument.
}

	Next, we take $\eta=\pi^*$, so that $\pi^*=\begin{pmatrix}
		\alpha^* \\ \beta^*
	\end{pmatrix}$ is a MFE defined in Definition \ref{MFE}, as such, 
	\begin{eqnarray}
		\tilde{\sigma}^T\pi^* = \frac{1}{1+\delta}\tilde{\sigma}^T\left(\tilde{\sigma}\tilde{\sigma}^T+\tilde{\nu}\tilde{\nu}^T\right)^{-1}\left(\delta\tilde{\mu}+\theta\mathbb{E}\left[\tilde{\sigma}^T\pi^*\right]\tilde{\sigma}\right)=\frac{\delta}{1+\delta}\frac{\mu}{\sigma}+\frac{\theta}{1+\delta}\mathbb{E}\left[\tilde{\sigma}^T\pi^*\right],\nonumber
	\end{eqnarray}
	then
	\begin{eqnarray}
		\mathbb{E}\left[\tilde{\sigma}^T\pi^*\right] =\mathbb{E}\left[\frac{\delta}{1+\delta}\right]\frac{\mu}{\sigma}+\mathbb{E}\left[\frac{\theta}{1+\delta}\right]\mathbb{E}\left[\tilde{\sigma}^T\pi^*\right].\nonumber
	\end{eqnarray}
	Thus $\mathbb{E}\left[\tilde{\sigma}^T\pi^*\right] = L\frac{\mu}{\sigma}$,
	where $L$ is shown in Eq.~(\ref{L}). 
	After further analysis, we  see that $\alpha^*$ and $\beta^*$ have the forms Eqs.~(\ref{alpha1}) and (\ref{beta1}).  The uniqueness of these forms follows from the preceding proof.
\end{proof}

\section{Proof of Theorem \ref{the3}.}\label{proof3}

\begin{proof}
	
	First, the strategies of the other fund managers are assumed to be fixed, and the focus is on optimizing the $k$-th fund manager's problem. Using Eqs.~(\ref{eq1}), (\ref{eq2}), (\ref{r1}), and (\ref{y1}), we have:
	\begin{eqnarray}
		\rd\left[R^k_t-\frac{\theta_k}{n}\sum_{i=1}^{n}R^i_t\right] &=& \rd \left[\left(1-\frac{\theta_k}{n}\right)R^k_t-\theta_kY_t\right]\nonumber\\
		&=& \left[\left(1-\frac{\theta_k}{n}\right)\left(\kappa + \pi^{kT}_t \tilde{\mu}_k   - \frac{1}{2}\pi^{kT}_t\tilde{\sigma}_k\tilde{\sigma}_k^T\pi^{k}_t - \frac{1}{2}\pi^{kT}_t\tilde{\nu}_k\tilde{\nu}_k^T\pi^{k}_t\right)\right.\nonumber\\
		&&\left.-\theta_k\left(\frac{n-1}{n}\kappa + \frac{1}{n}\sum_{i\neq k}\pi^{iT}_t \tilde{\mu}_i   - \frac{1}{2n}\sum_{i\neq k}\pi^{iT}_t\left(\tilde{\sigma}_i\tilde{\sigma}_i^T+\tilde{\nu}_i\tilde{\nu}_i^T\right)\pi^{i}_t\right)\right]\rd t\nonumber\\
		&&+\left[\left(1-\frac{\theta_k}{n}\right)\pi^{kT}_t\tilde{\sigma}_k-\frac{\theta_k}{n}\sum_{i\neq k}\pi^{iT}_t\tilde{\sigma}_i\right]\rd B_t\nonumber\\
		&&+\left(1-\frac{\theta_k}{n}\right)\pi^{kT}_t\tilde{\nu}_k \rd W^k_t - \frac{\theta_k}{n}\sum_{i\neq k}\pi^{iT}_t\tilde{\nu}_i \rd W^i_t.\nonumber
	\end{eqnarray}
	Define two functions:
	\begin{eqnarray}
		H\left(s,\pi\right)\!\!\!&:=&\!\!\!\left(1-\frac{\theta_k}{n}\right)\left(\kappa + \pi^{T} \tilde{\mu}_k   - \frac{1}{2}\pi^{T}\tilde{\sigma}_k\tilde{\sigma}_k^T\pi - \frac{1}{2}\pi^{T}\tilde{\nu}_k\tilde{\nu}_k^T\pi\right)\nonumber\\
		&&-\theta_k\left(\frac{n-1}{n}\kappa + \frac{1}{n}\sum_{i\neq k}\pi^{iT}_s \tilde{\mu}_i   - \frac{1}{2n}\sum_{i\neq k}\pi^{iT}_s\left(\tilde{\sigma}_i\tilde{\sigma}_i^T+\tilde{\nu}_i\tilde{\nu}_i^T\right)\pi^{i}_s\right),\nonumber\\
		G\left(s,\pi\right)\!\!\!&:=&\!\!\!\left[\left(1-\frac{\theta_k}{n}\right)\pi^{T}\tilde{\sigma}_k-\frac{\theta_k}{n}\sum_{i\neq k}\pi^{iT}_s\tilde{\sigma}_i\right]^2\!\!\!\!+\!\left(1-\frac{\theta_k}{n}\right)^2\!\!\pi^{T}\tilde{\nu}_k\tilde{\nu}_k^T\pi +\left(\frac{\theta_k}{n}\right)^2\!\!\sum_{i\neq k}\pi^{iT}_s\tilde{\nu}_i\tilde{\nu}_i^T\pi^{i}_s.\nonumber
	\end{eqnarray}
{ Then, for any fixed initial state $\left(t,r_1,\cdots,r_n\right)\in\left[0,\mathcal{T}\right]\times\mathbb{R}^n$,  we have 
	\begin{eqnarray}
		\mathbb{E}_t\left[\left(1-\frac{\theta_k}{n}\right)R^k_\mathcal{T}-\theta_kY_\mathcal{T}\right] &=& \left(1-\frac{\theta_k}{n}\right)r_k-\theta_k\frac{\sum_{i\neq k}r_i}{n}+\int_{t}^{\mathcal{T}}H\left(s,\pi^k_s\right)\rd s,\nonumber\\
		\mathbb{V}_t\left[\left(1-\frac{\theta_k}{n}\right)R^k_\mathcal{T}-\theta_kY_\mathcal{T}\right] &=&\int_{t}^{\mathcal{T}}G\left(s,\pi^k_s\right)\rd s,\nonumber
	\end{eqnarray}
	and
	\begin{eqnarray}
		J_k\left(\pi^k;\pi^{-k}|\left(t,r_1,\cdots,r_n\right)\right) =  \left(1-\frac{\theta_k}{n}\right)r_k-\theta_k\frac{\sum_{i\neq k}r_i}{n}+\int_{t}^{\mathcal{T}}\left(H\left(s,\pi^k_s\right)-\frac{\gamma_k}{2}G\left(s,\pi^k_s\right)\right)\rd s.\nonumber
	\end{eqnarray} }

For the sake of clarity, the notation $J_k\left(\pi^k;\pi^{-k}|\left(t,r_1,\cdots,r_n\right)\right)$ is simplified to $J_k\left(\pi^k;\pi^{-k}|\left(t,r_k,y\right)\right)$. 

For any admissible strategy $u$ of the $k$-th fund manager and any $h\in\mathbb{R}^{+}$,  $\pi^{k,h,u}$  is as defined in Definition \ref{def2}, and we have:
	\begin{eqnarray}
		J_k\left(\pi^{k,h,u};\pi^{-k}|\left(t,r_k,y\right)\right)\!\!\! &=&\!\!\! J_k\left(\pi^k;\pi^{-k}|\left(t,r_k,y\right)\right) \nonumber\\
		\!\!\!&&\!\!\!+ \int_{t}^{t+h}\!\!\left[\left(H\left(s,u_s\right)-\frac{\gamma_k}{2}G\left(s,u_s\right)\right)\!-\!\left(H\left(s,\pi^k_s\right)-\frac{\gamma_k}{2}G\left(s,\pi^k_s\right)\right)\right]\rd s.\nonumber
	\end{eqnarray}
	If $\left(\pi^{1},\cdots,\pi^{n}\right)$ is a time-consistent equilibrium, form the definition of Definition \ref{def2}, { 
	\begin{eqnarray}
		\pi^k_t=\arg\max_{u}\left\{H\left(t,u\right)-\frac{\gamma_k}{2}G\left(t,u\right)\right\},\quad\forall t \in\left[0,\mathcal{T}\right],\nonumber
	\end{eqnarray} }
	and if $\left(\pi^{1},\cdots,\pi^{n}\right)$ is a constant time-consistent equilibrium, after arranging and omitting the constant term, $\pi^k$ satisfies
	\begin{eqnarray}
		\pi^k=\arg\max_{\pi}&&\left\{\left(1-\frac{\theta_k}{n}\right)\left( \pi^{T} \tilde{\mu}_k   - \frac{1}{2}\pi^{T}\tilde{\sigma}_k\tilde{\sigma}_k^T\pi - \frac{1}{2}\pi^{T}\tilde{\nu}_k\tilde{\nu}_k^T\pi\right)\right.\nonumber\\
		&&\left.-\frac{\gamma_k}{2}\left[\left(1-\frac{\theta_k}{n}\right)\pi^{T}\tilde{\sigma}_k-\frac{\theta_k}{n}\sum_{i\neq k}\pi^{iT}\tilde{\sigma}_i\right]^2-\frac{\gamma_k}{2}\left(1-\frac{\theta_k}{n}\right)^2\pi^{T}\tilde{\nu}_k\tilde{\nu}_k^T\pi\right\},\nonumber
	\end{eqnarray}
	so that
	\begin{eqnarray}
		\tilde{\mu}_k+B\gamma_k\theta_k\tilde{\sigma}_k-\left(1-\frac{\gamma_k\theta_k}{n}+\gamma_k\right)M\pi^k=0,\nonumber
	\end{eqnarray}
	where $B$ and $M$ are given in Eq.~(\ref{B1}) and Eq.~(\ref{M1}).
	Therefore, the maximum point $\pi^{k,*}$ can be organized as
	\begin{eqnarray}
		\pi^{k,*}&=&\frac{1}{1-\frac{\gamma_k\theta_k}{n}+\gamma_k}M^{-1}\left(\tilde{\mu}_k+B\gamma_k\theta_k\tilde{\sigma}_k\right)\nonumber\\
		&=&\frac{1}{1-\frac{\gamma_k\theta_k}{n}+\gamma_k}M^{-1}\left(\tilde{\mu}_k+E\gamma_k\theta_k\tilde{\sigma}_k\right)-\frac{\gamma_k\theta_k}{n\left(1-\frac{\gamma_k\theta_k}{n}+\gamma_k\right)}M^{-1}\tilde{\sigma}_k\tilde{\sigma}_k^T\pi^{k,*},\nonumber
	\end{eqnarray}
	where $E$ has been shown in Eq.~(\ref{E1}).
	Further, let 
	\begin{eqnarray}
		F &:=&\begin{pmatrix}
			\sigma_k^2+\left(1-\frac{\gamma_k\theta_k}{n\left(1+\gamma_k\right)}\right)\nu_k^2 & -\sigma\sigma_k\\-\sigma\sigma_k & \sigma^2
		\end{pmatrix},\nonumber
	\end{eqnarray}
	we have
	\begin{eqnarray}
		\pi^{k,*}&=&\left[\left(1-\frac{\gamma_k\theta_k}{n}+\gamma_k\right)I+\frac{\gamma_k\theta_k}{n}M^{-1}\tilde{\sigma}_k\tilde{\sigma}_k^T\right]^{-1}M^{-1}\left(\tilde{\mu}_k+E\gamma_k\theta_k\tilde{\sigma}_k\right)\nonumber\\
		&=& \left[M\begin{pmatrix}
			1+\gamma_k & \frac{\gamma_k\theta_k	`\sigma_k}{n\sigma}\\0 & 1+\delta_k-\frac{\gamma_k\theta_k}{n}
		\end{pmatrix}\right]^{-1}\left(\tilde{\mu}_k+E\gamma_k\theta_k\tilde{\sigma}_k\right)\nonumber\\
		&=& \frac{1}{\left(1-\frac{\gamma_k\theta_k}{n}+\gamma_k\right)\sigma^2\nu_k^2}F\left(\tilde{\mu}_k+E\gamma_k\theta_k\tilde{\sigma}_k\right).\nonumber
	\end{eqnarray}
Next, consider that each fund manager achieves their optimum, meaning $\left(\pi_1^*,\cdots,\pi_n^*\right)$ forms a constant time-consistent equilibrium, then
	\begin{eqnarray}
		\tilde{\sigma}_k^T\pi^{k,*}=\frac{1}{1+\gamma_k}\frac{\mu}{\sigma}+\frac{\gamma_k\theta_k}{1+\gamma_k}E,\quad k=1,\cdots,n\nonumber
	\end{eqnarray}
	and
	\begin{eqnarray}
		E=\frac{1}{n}\sum_{i=1}^{n}\tilde{\sigma}_i^T\pi^{i,*}=\frac{1}{n}\sum_{i=1}^{n}\frac{1}{1+\gamma_k}\frac{\mu}{\sigma}+\frac{1}{n}\sum_{i=1}^{n}\frac{\gamma_k\theta_k}{1+\gamma_k}E.\nonumber
	\end{eqnarray}
	Noting that $\frac{1}{n}\sum_{i=1}^{n}\frac{\gamma_k\theta_k}{1+\gamma_k}< 1$ , $E$ has the same representation as Eq.~(\ref{K1}), i.e., $E=K$.
	
	At last, the equilibrium is given as Eqs.~(\ref{alp2}) and (\ref{bet2}), and the uniqueness follows from the above proof.
\end{proof}
\section{Proof of Theorem \ref{the4}.}\label{proof4}
\begin{proof}
	{ 
First, for a two-dimensional random variable $\eta$ that is $\mathcal{F}^{MF}_0$-measurable with $\mathbb{E}\|\eta\|^2 < +\infty$, let $\overline{R}_t := \mathbb{E}\left[R^{\eta}_t|\mathcal{F}^{B}_\mathcal{T}\right]$ for $t \in [0, \mathcal{T}]$. Then, the process $\{Z_t\}_{t \in [0, \mathcal{T}]}$ evolves according to Eq.~(\ref{Zt}).

Then,	for any fixed initial state $\left(t,z\right)\in\left[0,\mathcal{T}\right]\times\mathbb{R}$, we have
	\begin{eqnarray}
		\mathbb{E}_t\left[Z_\mathcal{T}\right] &=& z+\int_{t}^{\mathcal{T}}h\left(\pi_s\right)\rd s,\nonumber\\
		\mathbb{V}_t\left[Z_\mathcal{T}\right] &=&\int_{t}^{\mathcal{T}}g\left(\pi_s\right)\rd s,\nonumber\\
	J\left(\pi|\left(t,z\right)\right) &=&  z+\int_{t}^{\mathcal{T}}\left(h\left(\pi_s\right)-\frac{\gamma}{2}g\left(\pi_s\right)\right)\rd s,\nonumber
	\end{eqnarray} }
	where the two functions $h$ and $g$ are defined by
	\begin{eqnarray}
		h\left(\pi\right) &:=& \kappa + \pi^{T} \tilde{\mu}  - \frac{1}{2}\pi^{T}\tilde{\sigma}\tilde{\sigma}^T\pi - \frac{1}{2}\pi^{T}\tilde{\nu}\tilde{\nu}^T\pi-\theta\mathbb{E}\left[\kappa + \eta^{T} \tilde{\mu}   - \frac{1}{2}\eta^{T}\tilde{\sigma}\tilde{\sigma}^T\eta - \frac{1}{2}\eta^{T}\tilde{\nu}\tilde{\nu}^T\eta\right], \nonumber\\
		g\left(\pi\right) &:=& \left(\pi^{T}\tilde{\sigma}-\theta\mathbb{E}\left[\eta^{T}\tilde{\sigma}\right]\right)^2 +  \pi^{T}\tilde{\nu}\tilde{\nu}^T\pi.\nonumber
	\end{eqnarray}
	If $\pi^*$ is a constant time-consistent MFE, form Definition \ref{defination2}, $\eta=\pi^*$ and 
	\begin{eqnarray}
		\pi^*_t\equiv\pi^*=\arg\max_{u}\left\{h\left(u\right)-\frac{\gamma}{2}g\left(u\right)\right\}.\nonumber
	\end{eqnarray}
	Therefore, 
	\begin{eqnarray}
		\pi^*&=&\frac{1}{1+\gamma}\left(\tilde{\sigma}\tilde{\sigma}^T+\tilde{\nu}\tilde{\nu}^T\right)^{-1}\left(\tilde{\mu}+\theta\mathbb{E}\left[\tilde{\sigma}^T\pi^*\right]\tilde{\sigma}\right).\nonumber
	\end{eqnarray}
	Notice that
	\begin{eqnarray}
		\tilde{\sigma}^T\pi^* = \frac{1}{1+\gamma}\tilde{\sigma}^T\left(\tilde{\sigma}\tilde{\sigma}^T+\tilde{\nu}\tilde{\nu}^T\right)^{-1}\left(\tilde{\mu}+\theta\mathbb{E}\left[\tilde{\sigma}^T\pi^*\right]\tilde{\sigma}\right)=\frac{}{1+\gamma}\frac{\mu}{\sigma}+\frac{\theta}{1+\gamma}\mathbb{E}\left[\tilde{\sigma}^T\pi^*\right],\nonumber
	\end{eqnarray}
	then
	{\begin{eqnarray}
		\mathbb{E}\left[\tilde{\sigma}^T\pi^*\right] = \mathbb{E}\left[\frac{1}{1+\gamma}\right]\frac{\mu}{\sigma}+\mathbb{E}\left[\frac{\theta}{1+\delta}\right]\mathbb{E}\left[\tilde{\sigma}^T\pi^*\right].\nonumber
	\end{eqnarray}}
	Thus $\mathbb{E}\left[\tilde{\sigma}^T\pi^*\right] = R\frac{\mu}{\sigma}$,
	where $R$ is shown in Eq.~(\ref{R}). After further analysis, we find that $\alpha^*$ and $\beta^*$ take the forms given in Eqs.~(\ref{alpha2}) and (\ref{beta2}), respectively. The uniqueness then follows from the previous proof.
\end{proof}

\bibliographystyle{apalike}
\bibColoredItems{red}{oecd}
\bibliography{wpref}

\begin{thebibliography}{}

\bibitem[Aurell et~al., 2000]{aurell2000general}
Aurell, E., Baviera, R., Hammarlid, O., Serva, M., and Vulpiani, A. (2000).
\newblock A general methodology to price and hedge derivatives in incomplete
  markets.
\newblock {\em International Journal of Theoretical and Applied Finance},
  3(01):1--24.

\bibitem[Basak and Makarov, 2015]{basak2015competition}
Basak, S. and Makarov, D. (2015).
\newblock Competition among portfolio managers and asset specialization.
\newblock {\em Available at SSRN 1563567}.

\bibitem[Berk and Van~Binsbergen, 2015]{berk2015measuring}
Berk, J.~B. and Van~Binsbergen, J.~H. (2015).
\newblock Measuring skill in the mutual fund industry.
\newblock {\em Journal of Financial Economics}, 118(1):1--20.

\bibitem[Bj{\"o}rk et~al., 2017]{bjork2017time}
Bj{\"o}rk, T., Khapko, M., and Murgoci, A. (2017).
\newblock On time-inconsistent stochastic control in continuous time.
\newblock {\em Finance and Stochastics}, 21:331--360.

\bibitem[Brown et~al., 2001]{brown2001careers}
Brown, S.~J., Goetzmann, W.~N., and Park, J. (2001).
\newblock Careers and survival: Competition and risk in the hedge fund and
  {CTA} industry.
\newblock {\em The Journal of Finance}, 56(5):1869--1886.

\bibitem[Caines et~al., 2006]{caines2006large}
Caines, P.~E., Huang, M., and Malham{\'e}, R.~P. (2006).
\newblock Large population stochastic dynamic games: closed-loop mckean-vlasov
  systems and the nash certainty equivalence principle.
\newblock {\em Communications in Information and Systems}, 6(3):221--252.

\bibitem[Carhart, 1997]{carhart1997persistence}
Carhart, M.~M. (1997).
\newblock On persistence in mutual fund performance.
\newblock {\em The Journal of Finance}, 52(1):57--82.

\bibitem[Chevalier and Ellison, 1997]{chevalier1997risk}
Chevalier, J. and Ellison, G. (1997).
\newblock Risk taking by mutual funds as a response to incentives.
\newblock {\em Journal of Political Economy}, 105(6):1167--1200.

\bibitem[Chiu and Choi, 2016]{chiu2016supply}
Chiu, C.-H. and Choi, T.-M. (2016).
\newblock Supply chain risk analysis with mean-variance models: A technical
  review.
\newblock {\em Annals of Operations Research}, 240(2):489--507.

\bibitem[Cremers and Petajisto, 2009]{cremers2009active}
Cremers, K.~M. and Petajisto, A. (2009).
\newblock How active is your fund manager? a new measure that predicts
  performance.
\newblock {\em The Review of Financial Studies}, 22(9):3329--3365.

\bibitem[Dai et~al., 2021]{dai2021dynamic}
Dai, M., Jin, H., Kou, S., and Xu, Y. (2021).
\newblock A dynamic mean-variance analysis for log returns.
\newblock {\em Management Science}, 67(2):1093--1108.

\bibitem[Fama and French, 1993]{fama1993common}
Fama, E.~F. and French, K.~R. (1993).
\newblock Common risk factors in the returns on stocks and bonds.
\newblock {\em Journal of Financial Economics}, 33(1):3--56.

\bibitem[Fleming and Soner, 2006]{fleming2006controlled}
Fleming, W.~H. and Soner, H.~M. (2006).
\newblock {\em Controlled Markov processes and viscosity solutions}, volume~25.
\newblock Springer Science \& Business Media.

\bibitem[Georgantas et~al., 2024]{georgantas2024robust}
Georgantas, A., Doumpos, M., and Zopounidis, C. (2024).
\newblock Robust optimization approaches for portfolio selection: a comparative
  analysis.
\newblock {\em Annals of Operations Research}, 339(3):1205--1221.

\bibitem[Goldfarb and Iyengar, 2003]{goldfarb2003robust}
Goldfarb, D. and Iyengar, G. (2003).
\newblock Robust portfolio selection problems.
\newblock {\em Mathematics of Operations Research}, 28(1):1--38.

\bibitem[Guan and Hu, 2022]{guan2022time}
Guan, G. and Hu, X. (2022).
\newblock Time-consistent investment and reinsurance strategies for
  mean--variance insurers in n-agent and mean-field games.
\newblock {\em North American Actuarial Journal}, 26(4):537--569.

\bibitem[Kelly, 1956]{kelly1956new}
Kelly, J.~L. (1956).
\newblock A new interpretation of information rate.
\newblock {\em The Bell System Technical Journal}, 35(4):917--926.

\bibitem[Kempf and Ruenzi, 2008]{kempf2008tournaments}
Kempf, A. and Ruenzi, S. (2008).
\newblock Tournaments in mutual-fund families.
\newblock {\em The Review of Financial Studies}, 21(2):1013--1036.

\bibitem[Lacker and Zariphopoulou, 2019]{lacker2019mean}
Lacker, D. and Zariphopoulou, T. (2019).
\newblock Mean field and n-agent games for optimal investment under relative
  performance criteria.
\newblock {\em Mathematical Finance}, 29(4):1003--1038.

\bibitem[Lasry and Lions, 2007]{lasry2007mean}
Lasry, J.-M. and Lions, P.-L. (2007).
\newblock Mean field games.
\newblock {\em Japanese Journal of Mathematics}, 2(1):229--260.

\bibitem[Liu, 2007]{liu2007portfolio}
Liu, J. (2007).
\newblock Portfolio selection in stochastic environments.
\newblock {\em The Review of Financial Studies}, 20(1):1--39.

\bibitem[Malkiel, 1995]{malkiel1995returns}
Malkiel, B.~G. (1995).
\newblock Returns from investing in equity mutual funds 1971 to 1991.
\newblock {\em The Journal of Finance}, 50(2):549--572.

\bibitem[Markowitz, 1952]{markowits1952portfolio}
Markowitz, H.~M. (1952).
\newblock Portfolio selection.
\newblock {\em Journal of Finance}, 7(1):71--91.

\bibitem[Merton, 1969]{Merton1969}
Merton, R.~C. (1969).
\newblock Lifetime portfolio selection under uncertainty: The continuous-time
  case.
\newblock {\em The Review of Economics and Statistics}, 51(3):247--257.

\bibitem[Merton, 1975]{merton1975optimum}
Merton, R.~C. (1975).
\newblock {\em Optimum consumption and portfolio rules in a continuous-time
  model}, pages 621--661.
\newblock Academic Press.

\bibitem[Peng et~al., 2023]{peng2023relative}
Peng, J., Wei, P., and Xu, Z.~Q. (2023).
\newblock Relative growth rate optimization under behavioral criterion.
\newblock {\em SIAM Journal on Financial Mathematics}, 14(4):1140--1174.

\bibitem[Pham, 2009]{pham2009continuous}
Pham, H. (2009).
\newblock {\em Continuous-time stochastic control and optimization with
  financial applications}, volume~61.
\newblock Springer Science \& Business Media.

\bibitem[Sirri and Tufano, 1998]{sirri1998costly}
Sirri, E.~R. and Tufano, P. (1998).
\newblock Costly search and mutual fund flows.
\newblock {\em The Journal of Finance}, 53(5):1589--1622.

\bibitem[Thorp, 2008]{thorp2008kelly}
Thorp, E.~O. (2008).
\newblock The kelly criterion in blackjack sports betting, and the stock
  market.
\newblock In {\em Handbook of Asset and Liability Management}, pages 385--428.
  Elsevier.

\bibitem[Yu and Yuan, 2011]{yu2011investor}
Yu, J. and Yuan, Y. (2011).
\newblock Investor sentiment and the mean--variance relation.
\newblock {\em Journal of Financial Economics}, 100(2):367--381.

\bibitem[Zhou and Li, 2000]{zhou2000continuous}
Zhou, X.~Y. and Li, D. (2000).
\newblock Continuous-time mean-variance portfolio selection: A stochastic {LQ}
  framework.
\newblock {\em Applied Mathematics and Optimization}, 42:19--33.

\end{thebibliography}
\end{document}